\newcommand*{\field}[1]{\mathbb{#1}}%
\newtheorem{assumption}{Assumption}
\begin{document}

\title{A Decentralized Multi-Objective Optimization Algorithm}

\author{Maude J. Blondin   \and  Matthew Hale }

\institute{Maude J. Blondin, Corresponding author \at
             Université de Sherbrooke \\
              Sherbrooke, Canada\\
              maude.josee.blondin@usherbrooke.ca
           \and
            Matthew Hale \at
              University of Florida \\
              Gainesville, Florida\\
              matthewhale@ufl.edu
}

\date{Received: date / Accepted: date}

\maketitle

\begin{abstract}
During the past two decades, multi-agent optimization problems have drawn increased attention from the research community. When multiple objective functions are present among agents, many works optimize the sum of these objective functions. However, this formulation implies a decision regarding the relative importance of each objective function. In fact, optimizing the sum is a special case of a multi-objective problem in which all objectives are prioritized equally. In this paper, a distributed optimization algorithm that explores Pareto optimal solutions for non-homogeneously weighted sums of objective functions is proposed. This exploration is performed through a new rule based on agents' priorities that generates edge weights in agents' communication graph. These weights determine how agents update their decision variables with information received from other agents in the network. Agents initially disagree on the priorities of the objective functions though they are driven to agree upon them as they optimize. As a result, agents still reach a common solution. The network-level weight matrix is (non-doubly) stochastic, which contrasts with many works on the subject in which it is doubly-stochastic. New theoretical analyses are therefore developed to ensure convergence of the proposed algorithm. This paper provides a gradient-based optimization algorithm, proof of convergence to solutions, and convergence rates of the proposed algorithm. It is shown that agents' initial priorities influence the convergence rate of the proposed algorithm and that these initial choices affect its long-run behavior. Numerical results performed with different numbers of agents illustrate the performance and efficiency of the proposed algorithm.
\end{abstract}
\keywords{Multi-agent systems \and Distributed Optimization \and Pareto Front \and Multi-objective Optimization}


\section{Introduction}

{O}{ver} the last two decades, multi-agent systems have attracted significant interest  \cite{Qin16}-\cite{Wan16}. In particular, the study of the consensus problem, where agents have to agree on a common value, has been motivated by emerging applications such as formation control \cite{Oh15}. The consensus problem has been extended to multi-agent optimization, i.e., agents collectively work towards minimizing a sum of objective functions by minimizing a local objective and repeatedly averaging their iterates to reach agreement on a final answer. One common approach in problems with many objectives is optimizing their sum with each agent independently optimizing only one of the objective functions \cite{Ned09}-\cite{Ned01}. However, optimizing the sum carries an implicit decision about the problem formulation, namely that all the objectives have the same priority and that all agents agree on these priorities. 

Equal prioritization among functions represents a special case of a multi-objective problem, and applications in which objectives may have different importance are easy to envision. For instance, in a fleet of self-driving cars, agents may have different priorities in trajectory planning such as minimizing fuel usage vs. travel time, or in a collection of smart buildings, agents may have different preferences regarding the management of their energy \cite{Byungchul13}.

A large body of work on multi-objective optimization to solve problems of this kind has emerged for centralized cases. The Tchebycheff method, the weighting method, and the $\epsilon$-Constraint method \cite{Mie99} are examples of algorithms for centralized multi-objective optimization problems. More algorithms of this category are surveyed in \cite{Mie99}\cite{Sia04}. Such algorithms explore the Pareto optimal set using different prioritizations of the objective functions of the problem. With regard to these techniques, minimizing the sum of objective functions leads to a single element of the Pareto Front. Further exploring this front can provide additional optimal solutions in different senses. For multi-agent systems, exploring the Pareto Front would provide a larger range of operating conditions for systems based on agents' needs, which can be encoded in heterogeneous weights on objectives. To the best of our knowledge, such methods remain largely unexplored in a multi-agent context. 

This paper proposes a distributed algorithm for multi-agent multi-objective set-constrained problems, and the proposed algorithm enables the exploration of the Pareto Front. In particular, a team of $m$ agents optimizes the weighted sum of convex cost functions $f(x)=\sum_{i=1}^m w_if_i(x)$, where agent $i$ minimizes $f_i$. A common convex set constrains the agents. At the beginning of the optimization process, agents have an initial vector of priorities encoded as weights and an initial vector of decision variables. The proposed algorithm performs four steps at each iteration: \textit{i)} agent $i$ updates its vector of priorities using those received from other agents in the network, \textit{ii)} the vectors of priorities are used to generate the matrix of information weights for the decision variable update, \textit{iii)} agent $i$ updates its vector of decision variables with the generated matrix and the decision variables received from its neighbors, and \textit{iv)} agent $i$ takes a gradient descent step and projects its estimates on the constraint set. 

The proposed algorithm belongs to a class of averaging-based distributed optimization algorithms, e.g., \cite{Ned09}\cite{Ned10}-\cite{Liu15}.  The existing literature considers predominantly problems with doubly-stochastic weights on agents' information exchanges. Indeed, many works rely on the doubly-stochasticity assumption in their model to provide convergence rates and proofs of convergence \cite{Ned09}\cite{Ned10a}\cite{Ned10}\cite{Zha14}-\cite{Bia11}. Computing the infinite product of doubly-stochastic matrices simplifies the analysis of agents' computations, and there exist several rules that ensure the information matrix is doubly-stochastic, such as Metropolis-based weights \cite{Xia07} and the equal-neighbor model \cite{Ols11}\cite{Blo05}. These rules restrict communication among agents and do not allow agents to individually prioritize information received from other agents in the network. In addition, these rules require coordination among agents to selected admissible information weights, which can be difficult to achieve if communicating is difficult or costly. The proposed algorithm addresses the limitations related to the doubly-stochasticity assumption in addition to giving agents increased flexibility in their choices. In particular, the following aspects distinguish our algorithm from the existing literature: 

\begin{itemize}
    \item Agents independently prioritize the information received from their neighbors. The sum of each agent's preferences must be 1. While individual agents can easily ensure that their preferences sums to 1, this implies that agents do not have know or consider other agents' preferences. Therefore,  preferences of all agents for a particular objective function need not to sum to 1.  
    \item This independence regarding prioritization of objective functions leads to a network-level information exchange matrix that is (non-doubly) stochastic.
    \item While the agents are reaching an agreement on preferences, they explore the Pareto Front of objective functions. This front exploration leads to optimal solutions in different senses, which provides broader operating conditions for systems in conformity with agents' needs/preferences. 
\end{itemize}

 Because of these distinctions, new theoretical analysis is required to ensure algorithm convergence.  In this paper, the proposed algorithm operates over an undirected graph with time-varying weights, and the constraint set is the same for all agents. Theoretical analysis shows that the proposed algorithm drives agents to a common solution. Agents simultaneously reach an agreement on their preferences and compute the optimum with respect to these agents' preferences. Also, we develop convergence rates for the proposed algorithm, which are shown to be significantly influenced by agents' preferences. 
Numerical simulations show the convergence of the proposed algorithm to the optimal solution along with its convergence rate. The agents' agreement on preferences is also illustrated. Simulations further show that agents' initial preferences directly influence the final results of their computations. 
This paper is an extension of \cite{Blondin2020a} and it adds proof of convergence and convergence rates, in addition to new simulation results. 

The rest of the paper is organized as follows. Section \ref{Graph theory and multi-agent interactions} presents background on graph theory and multi-agent interactions. The multi-agent optimization model and the proposed distributed optimization algorithm are provided in Section \ref{Multi-agent optimization model}. Section \ref{Convergence of the proposed algorithm} provides proofs of convergence and convergence rates of the proposed algorithm. Section \ref{Numerical results} presents numerical results, and Section \ref{Conclusion} concludes the paper.

\section{Graph theory and multi-agent interactions}
\label{Graph theory and multi-agent interactions}

In this paper, agents' interactions are represented by a connected and undirected graph $G = (V,E)$, where $V= [m]:=\{1,2, \ldots, m\}$ is the set of agents and $E  \: \subset V \times  V$ is the set of edges. An edge exists between agent $i$ and $j$, i.e., $(i,j) \in E$, if agent $i$ communicates with agent $j$. By convention, $(i,i) \notin E $ for all $i$. The degree of agent $i$ is the total number of agents that agent $i$ communicates with, denoted $\textnormal{deg}(i)$. The degree matrix, denoted $\Delta(G)$,  is a  diagonal $n \times n$ matrix, with $\textnormal{deg}(i)$ on its diagonal for $i=1, \ldots, n$. The maximum vertex degree of $G$ is $ \Delta_{max} = \underset{i \in [m] }{\max} \;\: \textnormal{deg}(i)$.

 The adjacency matrix is an $n \times n$ matrix denoted $H(G)$, where $h^i_j$ is the entry in the $j$-th row and $i$-th column, defined as 
 \[
 h^i_j = \begin{cases} 1 &\mbox{if } (i,j) \in E \\
0 & otherwise \end{cases}.\]

\noindent Since $G$ is an undirected graph without self-loops, $H(G)$ is symmetric with zeros on its main diagonal. The Laplacian matrix associated with G is also symmetric, and is defined as 

\begin{equation}
 L(G) = \Delta(G) -H(G).   
\end{equation}

In this paper, we consider an arbitrary graph $G$, and, because $G$ is unambiguous, we will simply write its Laplacian as $L$. 

\section{Multi-agent optimization model}
\label{Multi-agent optimization model}

In this section, we formally define the class of problems to be solved. Then we propose a multi-objective multi-agent update law for solving them. 

\subsection{Problem Formulation}

In this paper, problems in which agents minimize a prioritized sum of convex objective functions are considered. Agent $i$ minimizes only the function $f_i$, about which we make the following assumption. 

\begin{assumption} For all $i \in \{1, \ldots, m\}$, the function $f_i : \mathbb{R}^n \rightarrow \mathbb{R} $ is continuously differentiable and convex. \hfill  $\triangle$
\label{ConvexAssump}
\end{assumption} 

All agents' decision variables are constrained to lie in the set $X$, about which we assume the following. 

\begin{assumption} X is non-empty, compact, and convex. \hfill  $\triangle$
\label{Xassump}
\end{assumption} 

\noindent We next consider the following  optimization problem. 
\begin{problem} \text{Given functions} $\{f_i\}_{i \in \{1, \ldots, m\}}$ satisfying \textit{Assumption} \ref{ConvexAssump},
\begin{equation}
\underset{}{\text{minimize}} \; \sum_{i=1}^m w_if_i(x),
\label{Problem}
\end{equation}
\begin{align*}
\underset{}{\text{subject to}} \; x \in  X
\end{align*}
\label{Opt_prob}
\end{problem}

\noindent where $x$ is the vector of decision variables, $w_i$ is a priority assigned to $f_i$, $\sum_{i=1}^m w_i = 1$, and $ 0 < w_i < 1 $ for all $i$. Agent $i$ knows only its objective function $f_i$.  The constraint set $X$ is identical for all agents.  \hfill $\diamond$ 
\\

For centralized problems, the priorities $\{w_i\}_{i \in \{1, \ldots, m\}}$ are fixed. Therefore, a standard convex optimization method could solve Problem 1 in a centralized way. However, for decentralized cases, agents may choose different priorities. Agent $i$ may choose $\{w_l^i\}_{l \in \{1, \ldots, m\}}$ while agent $j$ chooses $\{w_l^j\}_{l \in \{1, \ldots, m\}}$, with $w_l^i \neq w_l^j$ for all $l$. 

As this occurs, these priorities provide each agent with the flexibility to have preferences. For instance, mobile autonomous agents generating a trajectory may want to optimize fuel usage and travel time, and each agent can prioritize these two objectives according to their own needs. If agents' priorities differ, agents are solving different problems because they minimize different overall objective functions. As a result, reaching a common solution requires devising an optimization algorithm and driving agent priorities to a common value. 

Changing agents' priorities from their initial values implies that no single agents' preferences are obeyed exactly. However, the net change across all agents can be done fairly. One such way is to drive all agent's priorities to their average value. While one could envision first computing the average priorities and then optimizing, this is undesirable because it requires solving two separate problems sequentially. Instead, we devise an update law that drives agents to a common solution by interlacing optimization steps with priority averaging steps. Also, this interlacing enables agents to continuously modify their preferences based on the task at hand.

\subsection{Proposed Update Law}

\noindent At iteration $k$, agent $i$ updates its priority vector $w^i$. Agent $i$ assigns a priority to all agents (corresponding to the objective function updated by that agent), even though agent $i$ does not communicate with all agents. This provides agent $i$ with a way to influence all final priorities, and, as will be shown below, affect the final results agents attain. Agent $i$ also updates its decision vector $x^i$ by adding the weighted estimates received from its neighbors, then minimizing its objective function $f_i$ through gradient step, and then projecting its new estimate on its constraint set $X$. We have

\begin{equation}
    w^i(k+1) =  w^i(k) + c \sum_{j=1}^n h_i^j (w^j(k) - w^i(k))
  \label{w_i(k+1)}
\end{equation}

\begin{equation}
    a_j^i(k+1) =  q^i_j w^i_j(k) + \sum\limits_{\substack{j=i \\ j=1}}^m w_j^i \Tilde{q}^i_j
  \label{a_i(k+1)}
\end{equation}

\begin{equation}
    v^i(k) =  \sum_{j=1}^n a^i_j(k)x^j(k)
  \label{v_i(k)}
\end{equation}

 \begin{equation}
x^i(k+1) = P_{X}[v^i(k) - \alpha_kd_i(k)],
\label{x^i(k+1)}
\end{equation}

\noindent where $0 \:< \: c \: < \: 1/\Delta_{max}$, $h_j^i(k)$ is the j$^{th}$  i$^{th}$ entry of $H(G)$,  $a_j^i(k)$ is the weight that agent $i$ assigns to the data provided by agent $j$ at iteration $k$, $q^i_j$ is the j$^{th}$ i$^{th}$ entry of $Q$, where $Q=H+I$ and $I$ is the identity matrix, $\Tilde{q}^i_j = 1-q^i_j$, $\alpha_k$ is the gradient step size for all agents at time $k$, $P_{X_i}$ is the projection operation, and $d_i$ is the gradient vector of agent $i$ at $x^i(k)$. Formally, $d_i(k) = \nabla f_i(x^i(k))$. 

The equivalent network-level representation of \eqref{w_i(k+1)}-\eqref{x^i(k+1)} is

\begin{equation}
    W(k+1) =  PW(k),
  \label{ConsensusEq2}
\end{equation}

\noindent where $P = I - cL(G)$ and $W(k)$ is the column matrix of agents' priorities, along with $A(k)$ in which its column vectors are $a^i(k)$ for $i=\{1, \ldots, m\}$, and $V(k)$ in which its column vectors are $v^i(k)$ for $i=\{1, \ldots, m\}$.

 \begin{equation}
A(k) = Q \circ W(k) + (W(k)\circ \Tilde{Q})J\circ I, 
\label{Acomputation}
\end{equation}

\begin{equation}
    V(k) =  A(k)X(k),
  \label{v_i(k)}
\end{equation}

 \begin{equation}
X(k+1) = P_{X_c}[V(k) - \alpha_kD(k)].
\label{x^i(k+1)}
\end{equation}

where $\circ$ denotes the Hadamard product, $P_{X_c}$ is the projection operation that projects each column of $V(k) - \alpha_kD(k)$ individually, $D(k)$ is the column matrix of $d^i(k)$ for $i=\{1, \ldots, m\}$, and $X(k)$ is the column matrix of $x^i(k)$ for $i=\{1, \ldots, m\}$ . In line with the multi-objective optimization concept, our algorithm uses the priority vectors, $w^i$ for $i=\{1, \ldots, m\}$, to quantify the importance of information received to update $x^i$ for $i=\{1, \ldots, m\}$. Contrary to most existing works, the $A(k)$ matrix is a function of $W(k)$ and $A(k)$ matrix is stochastic, instead of doubly-stochastic. This occurs because an agent can ensure that its weights sum to 1, though different agents' weights for a particular objective need not to sum to 1. This implies that $A(k)$'s row sums need not to equal 1.

\noindent In \eqref{Acomputation}, $Q \circ W(k)$ computes the Hadamard product between $Q$ and $W(k)$, where the resulting matrix contains $w_j^i(k)$ for $(i,j) \in E$, $w_i^i(k)$ for all $i$, and the remaining terms are set to zero. Therefore, if agent $i$ does not communicate with agent $j$, a zero is assigned to that agent. Regarding the second term of \eqref{Acomputation}, $(W(k)\circ \Tilde{Q})J\circ I$ creates a diagonal matrix, where the diagonal terms are the sum of each row. The first term summed to the second term  in \eqref{Acomputation} means that agent $i$ assigns to itself the weights $w_j^i$ if $(i,j) \notin E$ and assigns a zero value to the entries of the \textit{i}-th row and \textit{j}-th colum for $(i,j) \notin E$.

The next lemma pertains to the weights of the $A$ matrix and the communication between agents. 
\begin{lemma} Since $a^i_j(k)$ is obtained from \eqref{w_i(k+1)} and \eqref{a_i(k+1)}, we have 
\begin{enumerate}
    \item  $a^i_i(k) \geq \underset{j \in [n]}{\min} \; \underset{i \in [n]}{\min} \: w^i_j(0)$ for all $k \geq 0$ and all $i$.
    \item $a^i_j(k) \geq \underset{j \in [n]}{\min} \; \underset{i \in [n]}{\min} \: w^i_j(0)$ for all $k \geq 0$ and all $(i,j) \in E$.
    \item $a^i_j(k) = 0$ for all $k$ if $(i,j) \notin E$.
\end{enumerate}
\label{Lemma_etaA}
\end{lemma}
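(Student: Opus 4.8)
The plan is to reduce all three items to a single invariant: every entry of the priority matrix $W(k)$ stays bounded below by $\eta := \min_{i,j} w^i_j(0)$, which is strictly positive because every initial priority lies in $(0,1)$. To set this up I would first record the relevant properties of the consensus matrix $P = I - cL$. Since $L = \Delta(G) - H(G)$ is symmetric with zero row sums, $P$ is symmetric and has row sums equal to one. Its off-diagonal entries equal $c\,h^i_j \ge 0$ (namely $c$ on edges and $0$ on non-edges), while its diagonal entries equal $1 - c\,\mathrm{deg}(i) \ge 1 - c\,\Delta_{max} > 0$, where strict positivity uses the hypothesis $0 < c < 1/\Delta_{max}$. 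Hence $P$ is entrywise nonnegative and row-stochastic, and by a one-line induction the same holds for every power $P^k$.

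Next I would exploit the network-level recursion \eqref{ConsensusEq2}, which gives $W(k) = P^k W(0)$. Reading off the priorities associated with a single objective $l$, the corresponding vector of agents' priorities evolves under the consensus map, so that $w^i_l(k) = \sum_{j} [P^k]_{ij}\, w^j_l(0)$. Because $[P^k]_{ij} \ge 0$ and the entries of each row of $P^k$ sum to one, each $w^i_l(k)$ is a convex combination of the initial priorities $\{w^j_l(0)\}_j$. Consequently $w^i_l(k) \ge \min_j w^j_l(0) \ge \eta$ for every $i$, every $l$, and every $k \ge 0$; this is the invariant advertised above.

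Finally, I would push the invariant through the definition \eqref{Acomputation} of $A(k)$, recalling from \eqref{w_i(k+1)} and \eqref{a_i(k+1)} how its entries are assembled. The term $Q \circ W(k)$ places $w^i_j(k)$ in the $(i,j)$ slot exactly when $(i,j) \in E$ or $i = j$, and zeros everywhere else, while $(W(k)\circ \tilde{Q})J \circ I$ contributes only on the diagonal, where it adds the nonnegative quantity $\sum_{j:\,(i,j)\notin E,\ j\neq i} w^i_j(k)$. Item~3 is then immediate, since for $(i,j)\notin E$ with $i\neq j$ both terms vanish, giving $a^i_j(k) = 0$. Item~2 follows because for $(i,j)\in E$ we get $a^i_j(k) = w^i_j(k) \ge \eta$. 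Item~1 follows because $a^i_i(k) = w^i_i(k) + \sum_{j:\,(i,j)\notin E} w^i_j(k) \ge w^i_i(k) \ge \eta$, the extra sum being a sum of nonnegative priorities.

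I expect the only delicate point to be bookkeeping rather than analysis: one must parse the Hadamard-product expression \eqref{Acomputation} carefully under the paper's convention that $h^i_j$ is the row-$j$, column-$i$ entry, so as to identify correctly which slots of $A(k)$ correspond to edges, to the diagonal, and to non-edges. The genuine analytic content, that the lower bound $\eta$ survives for \emph{all} $k$, rests entirely on $P^k$ remaining nonnegative and row-stochastic, which is precisely what the constraint $c < 1/\Delta_{max}$ secures by keeping the diagonal of $P$ nonnegative.
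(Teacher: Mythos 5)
Your proof is correct and follows essentially the same route as the paper's: both arguments rest on the observation that $P$ is nonnegative and row-stochastic, so the minimum entry of $W(k)$ never drops below $\min_{i,j} w^i_j(0)$, and the entries of $A(k)$ then inherit this bound through \eqref{Acomputation}. If anything, you are slightly more explicit than the paper, which does the one-step induction on $\mu(k)$ but does not spell out why $c<1/\Delta_{max}$ makes $P$ nonnegative, nor the case-by-case reading of $A(k)$'s diagonal, edge, and non-edge entries.
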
 

\begin{proof}
See Appendix.

\end{proof}

\noindent To simplify the forthcoming development, Eq. \eqref{x^i(k+1)} can be written as follows \cite{Ned08}, 

 \begin{equation}
x^i(k+1) = v^i(k) - \alpha_kd_i(k) + \phi^i(k),
\label{x^i(k+1)_2}
\end{equation}

 \begin{equation}
\phi^i(k) = P_{X}[v^i(k) - \alpha_kd_i(k)] - v^i(k)+\alpha_kkd_i(k).
\label{phi^i(k+1)}
\end{equation}

\noindent For all $i$ and for all $k$ and $s$ where $k > s$, the above equivalent form allows us to express the decision variable update over time as:  

\begin{equation}
\begin{split}
        x^i(k+1)=\sum_{j=1}^m [\Phi(k,s)]_j^ix^j(s)-\sum_{r=s}^{k-1}\sum_{j=1}^m[\Phi(k,r+1)]_j^i\alpha_rd_j(r) \\ - \alpha_kd_i(k) + \sum_{r=s}^{k-1}\sum_{j=1}^m[\Phi(k,r+1)]_j^i
\phi^j(r)+\phi^i(k), 
\end{split} 
\label{x^i(k+1)_k_s} 
\end{equation}

\noindent where the transition matrix $\Phi(k,s) = A(k)A(k-1), \ldots, A(s)$ \cite{Ned09}.

\section{Convergence of the proposed algorithm}
\label{Convergence of the proposed algorithm}

This section provides the convergence analysis for the proposed algorithm \eqref{w_i(k+1)}-\eqref{x^i(k+1)}. 

The following well-known lemma confirms that the priority update \eqref{w_i(k+1)} does indeed compute average priorities. 

\begin{lemma} $\lim_{k\to\infty} w^i(k) = \overline{w} = \sum_{j=1}^m \dfrac{w^j(0)}{m}$ for $j=1, \ldots, m$.  At the network level, $W(k)=\overline{W}$, where ,$\overline{W} = \mathbbm{1} \overline{w}^\intercal$.
\label{Lemma_overline{w}}
\end{lemma}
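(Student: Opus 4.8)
The plan is to reduce the claim to the standard spectral analysis of the averaging matrix $P = I - cL$ that drives the network-level recursion \eqref{ConsensusEq2}, namely $W(k+1) = PW(k)$. First I would record the elementary structural properties of $P$. Since $L$ is the Laplacian of an undirected graph it is symmetric and positive semidefinite, so $P$ is symmetric. Its off-diagonal entries are $P_{ij} = c\,h^i_j \geq 0$ and its diagonal entries are $P_{ii} = 1 - c\,\textnormal{deg}(i)$, which are strictly positive because $c < 1/\Delta_{max} \leq 1/\textnormal{deg}(i)$. Finally, $L\mathbbm{1} = 0$ gives $P\mathbbm{1} = \mathbbm{1}$, and by symmetry $\mathbbm{1}^\intercal P = \mathbbm{1}^\intercal$; hence $P$ is a symmetric, doubly stochastic matrix with positive diagonal.

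Next I would pin down the spectrum of $P$. Writing the eigenvalues of $L$ as $0 = \lambda_1 \leq \lambda_2 \leq \cdots \leq \lambda_m$, the eigenvalues of $P$ are $1 - c\lambda_i$. Because $G$ is connected, $0$ is a simple eigenvalue of $L$ with eigenvector $\mathbbm{1}$, so $1$ is a simple eigenvalue of $P$, and $1 - c\lambda_i < 1$ for all $i \geq 2$ since $\lambda_i > 0$ there. For the lower bound I would apply Gershgorin's theorem (or the standard estimate $\lambda_m \leq 2\Delta_{max}$): combined with $c < 1/\Delta_{max}$ this yields $c\lambda_i \leq c\lambda_m \leq 2c\Delta_{max} < 2$, so $1 - c\lambda_i > -1$. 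Thus $1$ is a simple eigenvalue of $P$ and every other eigenvalue lies strictly inside $(-1,1)$.

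With these spectral facts in hand, I would diagonalize $P$ in an orthonormal eigenbasis $\{u_1 = \mathbbm{1}/\sqrt{m},\, u_2, \ldots, u_m\}$ and write $P^k = \sum_{i=1}^m (1-c\lambda_i)^k u_i u_i^\intercal$. Every term with $i \geq 2$ decays geometrically as $k \to \infty$ because $|1-c\lambda_i| < 1$, leaving $\lim_{k\to\infty} P^k = u_1 u_1^\intercal = \tfrac{1}{m}\mathbbm{1}\mathbbm{1}^\intercal$. Applying this to $W(k) = P^k W(0)$ gives $\lim_{k\to\infty} W(k) = \tfrac{1}{m}\mathbbm{1}\mathbbm{1}^\intercal W(0)$. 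Since $\tfrac{1}{m}\mathbbm{1}^\intercal W(0) = \tfrac{1}{m}\sum_{j=1}^m (w^j(0))^\intercal = \overline{w}^\intercal$, this limit is exactly $\mathbbm{1}\overline{w}^\intercal = \overline{W}$, and reading off the $i$-th row yields $\lim_{k\to\infty} w^i(k) = \overline{w}$ for every $i$, as claimed.

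Regarding difficulty: this is classical consensus theory, so I expect no substantive obstacle. The only step that genuinely uses the hypotheses — and hence the one to treat with care — is confirming that the extreme eigenvalue of $P$ remains inside the unit disk, i.e., that the step-size restriction $c < 1/\Delta_{max}$ together with the degree bound $\lambda_m \leq 2\Delta_{max}$ forces $|1 - c\lambda_i| < 1$ for $i \geq 2$. If one wished to avoid the eigenvalue estimate on $\lambda_m$, an alternative route would be to argue that $P$ is a primitive stochastic matrix (irreducible by connectivity of $G$, aperiodic by the positive diagonal) and invoke the Perron--Frobenius theorem together with double stochasticity to identify the limiting matrix as $\tfrac{1}{m}\mathbbm{1}\mathbbm{1}^\intercal$.
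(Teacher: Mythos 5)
Your proposal is correct, and it matches the standard argument that the paper itself delegates entirely to its citations (\cite{Olf07}, \cite{Khi07}) rather than proving: the spectral analysis of $P = I - cL$ (symmetric, doubly stochastic, simple eigenvalue $1$, all others in $(-1,1)$ via $c < 1/\Delta_{max}$ and $\lambda_m \leq 2\Delta_{max}$) yielding $P^k \to \tfrac{1}{m}\mathbbm{1}\mathbbm{1}^\intercal$ is precisely the classical consensus proof those references contain. Your alternative Perron--Frobenius route is likewise standard and equally valid; no gaps.
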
  

\begin{proof}
See \cite{Olf07}\cite{Khi07}.
\end{proof}

From Assumption \ref{ConvexAssump}, the gradient is continuous and from Assumption \ref{Xassump} $X$ is compact. Therefore, we have $||d_i(k)|| \leq L$ for all $i$. From that statement, Lemma \ref{phiBounded} follows.  

\begin{lemma}
The errors $\phi^i(k)$ satisfy $||\phi^i(k)|| \leq \alpha_kL$ for all $i$ and $k$.
\label{phiBounded}
\end{lemma}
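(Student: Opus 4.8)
The plan is to read $\phi^i(k)$ as the residual of a Euclidean projection onto a convex set and to bound it using the non-expansiveness of that projection. From its definition \eqref{phi^i(k+1)}, and writing $y := v^i(k) - \alpha_k d_i(k)$, the error is precisely the displacement induced by projecting $y$ onto $X$,
\[
\phi^i(k) = P_X[y] - y.
\]
(The final term of \eqref{phi^i(k+1)} should read $\alpha_k d_i(k)$ rather than $\alpha_k k\, d_i(k)$.) Hence it suffices to show that projecting $y$ moves it by at most $\alpha_k L$.

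First I would invoke the closest-point characterization of the projection onto the closed convex set $X$ guaranteed by Assumption \ref{Xassump}: for every $y \in \mathbb{R}^n$ and every feasible $z \in X$,
\[
\|P_X[y] - y\| \leq \|z - y\|,
\]
since $P_X[y]$ is by definition the point of $X$ nearest $y$. The idea is then to select a feasible reference point $z$ that makes the right-hand side collapse to the stated bound.

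The crux of the argument is to produce such a point, and the natural choice is $z = v^i(k)$, the pre-gradient iterate itself. To justify $v^i(k) \in X$ I would use the averaging step \eqref{v_i(k)} to express $v^i(k) = \sum_{j=1}^m a^i_j(k)\, x^j(k)$ as a convex combination: each $x^j(k)$ belongs to $X$ because it is the output of the projection in the decision-variable update (with feasible initialization $x^j(0) \in X$), the weights obey $a^i_j(k) \geq 0$ by Lemma \ref{Lemma_etaA}, and they satisfy $\sum_{j=1}^m a^i_j(k) = 1$ because each agent normalizes its own weights (the stochasticity of $A(k)$). Convexity of $X$ then yields $v^i(k) \in X$. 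This feasibility of $v^i(k)$ is exactly where Assumption \ref{Xassump} and the stochasticity of $A(k)$ enter, and I expect it to be the only substantive point of the proof; everything else is mechanical.

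Substituting $z = v^i(k)$ into the projection inequality gives
\[
\|\phi^i(k)\| = \|P_X[y] - y\| \leq \|v^i(k) - y\| = \|\alpha_k d_i(k)\| = \alpha_k\, \|d_i(k)\|,
\]
and the uniform gradient bound $\|d_i(k)\| \leq L$ established just before the lemma closes the argument, yielding $\|\phi^i(k)\| \leq \alpha_k L$ for all $i$ and $k$.
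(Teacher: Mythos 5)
Your proof is correct and is essentially the argument the paper relies on: the paper's "proof" is just a citation to Nedi\'c--Ozdaglar--Parrilo, and the cited lemma is established exactly as you do it, by noting $\phi^i(k)=P_X[y]-y$ with $y=v^i(k)-\alpha_k d_i(k)$, using the nearest-point property of the projection with the feasible comparison point $z=v^i(k)$ (feasible because it is a convex combination of projected, hence feasible, iterates under the row-stochasticity of $A(k)$), and then applying the gradient bound. You also correctly identify the typo $\alpha_k k\,d_i(k)$ in \eqref{phi^i(k+1)}.
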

\begin{proof}
See \cite{Ned08}.
\end{proof}

The next Lemma describes the convergence behavior of $\Phi(k,s)$. 
\begin{lemma}
From Lemma \ref{Lemma_etaA}, the convergence of $\Phi(k,s)$ is geometric according to 
\begin{equation}
    |[\Phi(k,s)]^j_i - \gamma ^j(s)| \leq C \beta^{k-s},
\end{equation}

\noindent where $B_0 = m-1$, $m$ is the number of agents, $\gamma^j(s) = \lim_{k \rightarrow \infty} [\Phi(k,s)]^j_i$,  $C=2\dfrac{(1 + \underset{j \in [m]}{\min} \; \underset{i \in [m]}{\min} \: w^j_i(0) ^{-B_0})}{1 - \underset{j \in [m]}{\min} \; \underset{i \in [m]}{\min} \: w^j_i(0)^{B_0}}$, and $\beta=(1 - \underset{j \in [m]}{\min} \; \underset{i \in [m]}{\min} \: w^j_i(0)^{B_0})^{1/B_0}$.  
\label{ConvergenceOfPhi}
\end{lemma}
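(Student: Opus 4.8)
The plan is to reduce the statement to a standard contraction estimate for products of stochastic matrices, in the spirit of \cite{Ned09}, while accounting for the fact that the $A(k)$ are only stochastic (not doubly-stochastic), so the limit is the unknown vector $\gamma(s)$ rather than $\tfrac{1}{m}\mathbbm{1}$. Write $\eta := \min_{j\in[m]}\min_{i\in[m]} w^j_i(0)$, so that Lemma~\ref{Lemma_etaA} guarantees that every nonzero entry of each $A(k)$ (its diagonal entries and its entries associated with edges) is at least $\eta$, that its zero pattern coincides with the graph, and that each $A(k)$ is stochastic. These are exactly the ingredients needed to run an ergodicity argument on the transition matrices $\Phi(k,s)=A(k)A(k-1)\cdots A(s)$.

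First I would prove the combinatorial key step: over any window of $B_0=m-1$ consecutive steps, every entry of the block product $\Phi(s+B_0-1,s)$ is bounded below by $\eta^{B_0}$. Since $G$ is connected on $m$ vertices, its diameter is at most $m-1=B_0$, so for every ordered pair $(i,j)$ there is a walk of length at most $B_0$ joining them; because each $A(k)$ has strictly positive diagonal entries ($\geq\eta$), one can pad such a walk to exactly $B_0$ steps by inserting self-loops, and each of the $B_0$ factors along the walk contributes a weight $\geq\eta$. Multiplying yields $[\Phi(s+B_0-1,s)]^j_i\geq\eta^{B_0}$, so each block product is \emph{scrambling} with a uniform positive lower bound $\eta^{B_0}$.

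Next I would convert this into geometric contraction. Let $\delta_j(k,s):=\max_i[\Phi(k,s)]^j_i-\min_i[\Phi(k,s)]^j_i$ denote the spread of the column-$j$ slice of $\Phi(k,s)$. Using the uniform lower bound $\eta^{B_0}$ and the stochasticity of the factors, a standard \emph{coefficient of ergodicity} (Dobrushin/Hajnal) estimate shows that each additional block of $B_0$ steps multiplies the spread by at most $(1-\eta^{B_0})$, i.e. $\delta_j(k,s)\le(1-\eta^{B_0})\,\delta_j(k-B_0,s)$. Since each new slice is a convex combination of the previous ones, the running maxima are nonincreasing and the running minima nondecreasing along the product; as the spread decays to $0$, the limit $\gamma^j(s)=\lim_{k\to\infty}[\Phi(k,s)]^j_i$ exists and is independent of the row index $i$, which is precisely the claimed rank-one structure $\mathbbm{1}\gamma(s)^\intercal$.

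Finally I would assemble the stated constants. Writing $k-s=t\,B_0+r$ with $0\le r<B_0$, iterating the per-block contraction over the $t=\lfloor(k-s)/B_0\rfloor$ complete blocks gives a decay of $(1-\eta^{B_0})^{t}$; re-expressing this in terms of $k-s$ produces the advertised per-step rate $\beta=(1-\eta^{B_0})^{1/B_0}$ together with a bounded correction coming from the incomplete block of length $r$ and from the initial spread. The prefactor $C=2\frac{1+\eta^{-B_0}}{1-\eta^{B_0}}$ then results from bounding $|[\Phi(k,s)]^j_i-\gamma^j(s)|$ by the spread and carrying the slack of the incomplete block through the geometric sum, using $\eta\le 1$. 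I expect the main obstacle to lie in this last step rather than in the combinatorics: in the merely-stochastic case the limit $\gamma^j(s)$ is not available in closed form (unlike the doubly-stochastic value $1/m$), so the contraction must be phrased intrinsically through the span seminorm, and matching the exact prefactor $C$ (the factor $2$ and the $1+\eta^{-B_0}$ term) requires careful bookkeeping of the incomplete final block and of the conversion from the block rate to the per-step rate $\beta$.
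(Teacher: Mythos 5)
Your proposal is correct and follows essentially the same route as the paper: the paper's proof of Lemma \ref{ConvergenceOfPhi} simply invokes Lemmas 3 and 4 of \cite{Ned09} together with Lemma \ref{Lemma_etaA}, and your argument (uniform lower bound $\eta$ on diagonal and edge entries, connectivity giving a positive block product over $B_0=m-1$ steps, then a coefficient-of-ergodicity contraction yielding the rate $\beta$ and constant $C$) is precisely the content of those cited lemmas, adapted as the paper intends to the merely-stochastic case where the limit is a general rank-one matrix $\mathbbm{1}\gamma(s)^\intercal$.
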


\begin{proof}
 See Lemma 3 and Lemma 4 in \cite{Ned09} and Lemma \ref{Lemma_etaA} above.
\end{proof} 

\noindent To prove the convergence results, we use the following lemmas \cite{Ned08}. 

\begin{lemma}
Assume that $0 < \rho < 1$, \{$\lambda_k$\}$_{k \in \field{N}}$ be a positive scalar sequence, and $\lim_{k\to\infty} \lambda_k=0$. Then, 
\begin{align*}
    \lim_{k\to\infty} \sum_{l=0}^k\rho^{k-l}\lambda_l.
=0\end{align*}

\noindent Moreover, if $\sum_k^{\infty}\lambda_k < \infty$, we have
\begin{align*}
    \sum_{k=1}^{\infty} \sum_{l=0}^k\rho^{k-l}\lambda_l < \infty \end{align*}.
\label{SummationLemma}
\end{lemma}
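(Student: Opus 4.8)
The plan is to prove the two claims separately, treating the first (convergence to zero) as the conceptual core and the second (summability) as a Fubini-style reorganization built on top of it. For the first claim, I would split the sum $\sum_{l=0}^k \rho^{k-l}\lambda_l$ at an intermediate index, exploiting the fact that the tail of a geometric series is small while the tail of the sequence $\{\lambda_l\}$ is small. Concretely, fix $\epsilon > 0$. Since $\lambda_l \to 0$, choose $N$ so that $\lambda_l < \epsilon$ for all $l \ge N$; since $\lambda_l \to 0$ it is also bounded, say $\lambda_l \le M$ for all $l$. Then for $k > N$ write
\begin{equation*}
\sum_{l=0}^k \rho^{k-l}\lambda_l = \sum_{l=0}^{N-1}\rho^{k-l}\lambda_l + \sum_{l=N}^{k}\rho^{k-l}\lambda_l.
\end{equation*}

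For the first (finite, fixed-length) block I would bound $\lambda_l \le M$ and observe that each factor $\rho^{k-l} \le \rho^{k-N+1} \to 0$ as $k \to \infty$, so this block vanishes. For the second block I would bound $\lambda_l < \epsilon$ and use $\sum_{l=N}^{k}\rho^{k-l} \le \sum_{j=0}^{\infty}\rho^{j} = 1/(1-\rho)$, giving a bound of $\epsilon/(1-\rho)$ uniformly in $k$. Letting $k \to \infty$ and then $\epsilon \to 0$ yields the limit zero. The only subtlety here is getting the two-$\epsilon$ argument in the right order: the geometric weight must be allowed to kill the fixed front block \emph{after} $N$ has been frozen by the tail condition, so that $N$ does not depend on $k$.

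For the second claim I would interchange the order of summation. Writing $S = \sum_{k=1}^{\infty}\sum_{l=0}^{k}\rho^{k-l}\lambda_l$, all terms are nonnegative, so Tonelli's theorem justifies swapping the sums freely. Collecting, for each fixed $l$ the weight $\rho^{k-l}$ is summed over all $k \ge l$, which gives
\begin{equation*}
S = \sum_{l=0}^{\infty}\lambda_l \sum_{k \ge \max(1,l)}\rho^{k-l} \le \sum_{l=0}^{\infty}\lambda_l \sum_{j=0}^{\infty}\rho^{j} = \frac{1}{1-\rho}\sum_{l=0}^{\infty}\lambda_l,
\end{equation*}
which is finite precisely because $\sum_l \lambda_l < \infty$ by hypothesis. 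I expect the main obstacle to be purely bookkeeping rather than conceptual: keeping the index ranges consistent when the inner sum starts at $l=0$ but the outer sum starts at $k=1$, so that the reindexing $j = k-l$ is handled correctly at the boundary. Both parts rest only on the uniform geometric bound $\sum_{j\ge 0}\rho^j = 1/(1-\rho)$, which is available since $0 < \rho < 1$, and no earlier result from the paper is actually needed.
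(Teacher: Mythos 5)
Your proof is correct: the paper does not prove this lemma itself but defers to Lemma~7 of \cite{Ned08}, and your argument (splitting $\sum_{l=0}^{k}\rho^{k-l}\lambda_l$ at a frozen index $N$ for the limit, and a Tonelli-type interchange with the geometric bound $1/(1-\rho)$ for the summability) is essentially the same standard argument used there. No gaps; both the order of quantifiers in the two-$\epsilon$ step and the boundary bookkeeping at $k\ge\max(1,l)$ are handled properly.
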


\begin{proof}
See proof for Lemma 7 in \cite{Ned08}.
\end{proof}

\begin{lemma}
Assume that X is a nonempty closed convex set in $\mathbb{R}^n$. Thus, we obtain for any $x \in \mathbb{R}^n $, $|| P_X[x]-y||^2 \leq ||x-y||^2-||P_X[x]-x||^2$ for all $y \in X$. 
\label{norm(projection-y)}
\end{lemma}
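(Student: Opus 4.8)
The plan is to rely on the standard variational (obtuse-angle) characterization of the Euclidean projection onto a closed convex set, and then reduce the claimed inequality to a single algebraic expansion. Writing $p := P_X[x]$ for brevity, the first step is to establish the first-order optimality condition
\begin{equation*}
\langle x - p, \, y - p \rangle \leq 0 \quad \text{for all } y \in X,
\end{equation*}
where $\langle \cdot, \cdot \rangle$ is the standard Euclidean inner product. Since $X$ is nonempty, closed, and convex, $p$ is the unique minimizer of $g(z) = \|z - x\|^2$ over $z \in X$. For any fixed $y \in X$ and any $t \in (0,1]$, convexity of $X$ gives $p + t(y - p) \in X$, so $g(p) \leq g(p + t(y-p))$. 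Expanding this inequality and discarding the nonnegative $t^2 \|y - p\|^2$ term yields $0 \leq 2t \langle p - x, \, y - p \rangle$; dividing by $t$ and letting $t \to 0^+$ produces the variational inequality above.

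The second step is purely algebraic. I would expand the right-hand quantity by inserting $p$ via $x - y = (x - p) + (p - y)$, giving
\begin{equation*}
\|x - y\|^2 = \|x - p\|^2 + 2\langle x - p, \, p - y \rangle + \|p - y\|^2.
\end{equation*}
Rearranging this identity shows that the asserted bound $\|p - y\|^2 \leq \|x - y\|^2 - \|p - x\|^2$ is equivalent to the single inequality $0 \leq \langle x - p, \, p - y \rangle$. Since $\langle x - p, \, p - y \rangle = -\langle x - p, \, y - p \rangle$, this last inequality is precisely the variational condition established in the first step, which closes the argument.

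There is no genuine obstacle here, as this is a classical property of projections; the only point requiring care is the derivation of the variational inequality, where one must invoke convexity of $X$ to form the admissible perturbation $p + t(y - p) \in X$ and then pass to the limit $t \to 0^+$ correctly. I would state this as the main lemma-internal step and keep the subsequent expansion as a short computation. An alternative I would consider, if the inner-product machinery were deemed too heavy for the paper's style, is to cite the variational characterization directly as a known fact and present only the expansion; but since the condition is elementary to derive and makes the proof self-contained, I would include it explicitly.
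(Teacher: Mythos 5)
Your proof is correct, and it is worth noting that the paper itself does not prove this lemma at all: it simply cites Lemma 1(b) of the constrained-consensus paper of Nedi\'c, Ozdaglar, and Parrilo \cite{Ned08}. Your argument is the standard one (and is essentially the argument behind the cited result): first derive the variational characterization $\langle x - P_X[x],\, y - P_X[x]\rangle \le 0$ for all $y \in X$, then obtain the claimed inequality from the expansion of $\|x-y\|^2$ around $P_X[x]$. What your self-contained version buys is independence from the external reference; what the citation buys is brevity, since this is a textbook property of Euclidean projections. One small wording issue in your first step: from $0 \le 2t\langle p - x,\, y - p\rangle + t^2\|y-p\|^2$ you cannot literally ``discard'' the nonnegative $t^2$ term to conclude $0 \le 2t\langle p - x,\, y - p\rangle$ (dropping a nonnegative summand from the larger side weakens, not strengthens, the inequality). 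The correct mechanism is the one you also name: divide by $t>0$ to get $0 \le 2\langle p - x,\, y - p\rangle + t\|y-p\|^2$ and then let $t \to 0^+$, which kills the residual term. With that sentence reordered, the proof is complete and rigorous.
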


\begin{proof}
See proof for Lemma 1(b) in \cite{Ned08}.
\end{proof}

\begin{lemma}
Let {$x^i(k)$} be generated by \eqref{v_i(k)}-\eqref{x^i(k+1)}. We have for any $z \in X$ and all $k\geq 0$, 
\begin{align*}
    \sum_{i=1}^m|| x^i(k+1)-z||^2 \leq \sum_{i=1}^m\sum_{j=1}^m a_j^i(w^i(k))||x^j(k)-z||^2 + \\ \alpha_k^2\sum_{i=1}^m||d_i(k)||^2 -2\alpha_k\sum_{i=1}^m
(f_i(v^i(k))-f_i(z))-\sum_{i=1}^m||\phi^i(k)||^2.\end{align*}
\label{norm(x-z)}
\end{lemma}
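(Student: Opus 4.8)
The plan is to prove the inequality one agent at a time and then sum over $i$, combining the nonexpansiveness of the projection with the convexity guaranteed by Assumption~\ref{ConvexAssump}. First I would apply Lemma~\ref{norm(projection-y)} to the projected update $x^i(k+1)=P_X[v^i(k)-\alpha_k d_i(k)]$ with the choices $x=v^i(k)-\alpha_k d_i(k)$ and $y=z$ (admissible since $z\in X$), which gives
\[
\|x^i(k+1)-z\|^2 \le \|v^i(k)-\alpha_k d_i(k)-z\|^2 - \|x^i(k+1)-(v^i(k)-\alpha_k d_i(k))\|^2 .
\]
By the equivalent form \eqref{x^i(k+1)_2}, the residual $x^i(k+1)-(v^i(k)-\alpha_k d_i(k))$ is precisely $\phi^i(k)$, so the last term equals $\|\phi^i(k)\|^2$ and already accounts for the final negative term of the claim.

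Next I would expand the first term as $\|v^i(k)-z\|^2 - 2\alpha_k\langle d_i(k),\,v^i(k)-z\rangle + \alpha_k^2\|d_i(k)\|^2$, which produces the $\alpha_k^2\|d_i(k)\|^2$ contribution directly. The cross term is disposed of by the (sub)gradient inequality: convexity of $f_i$ yields $f_i(z)\ge f_i(v^i(k))+\langle d_i(k),\,z-v^i(k)\rangle$, hence $-2\alpha_k\langle d_i(k),\,v^i(k)-z\rangle \le -2\alpha_k\bigl(f_i(v^i(k))-f_i(z)\bigr)$. This is the step I expect to be the main subtlety, since the function-value gap $f_i(v^i(k))-f_i(z)$ appears only if $d_i(k)$ is read as a (sub)gradient of $f_i$ evaluated at $v^i(k)$, matching the projected-consensus framework of \cite{Ned08}; if instead $d_i(k)=\nabla f_i(x^i(k))$ were used literally, an extra mismatch term $\langle d_i(k),\,x^i(k)-v^i(k)\rangle$ would survive and would have to be absorbed or bounded separately. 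I would therefore adopt the $v^i(k)$ convention throughout this estimate.

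Finally I would sum the per-agent bound over $i=1,\dots,m$ and treat the leading term $\sum_i\|v^i(k)-z\|^2$. Since $v^i(k)=\sum_{j}a_j^i(k)x^j(k)$ is a convex combination—the weights $a_j^i(k)$ are nonnegative by Lemma~\ref{Lemma_etaA} and, because each agent's weights sum to one, satisfy $\sum_j a_j^i(k)=1$—Jensen's inequality applied to the convex map $y\mapsto\|y-z\|^2$ gives $\|v^i(k)-z\|^2\le\sum_j a_j^i(k)\|x^j(k)-z\|^2$. Summing over $i$ produces the first term $\sum_i\sum_j a_j^i(w^i(k))\|x^j(k)-z\|^2$, and collecting the four contributions yields the stated inequality. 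Apart from the gradient-evaluation convention flagged above, every manipulation is the standard projection-plus-expansion estimate, so I expect no further obstacle.
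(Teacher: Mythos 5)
Your proof follows essentially the same route as the paper's: project, identify the residual with $\phi^i(k)$ via \eqref{x^i(k+1)_2}, expand the square, apply the gradient inequality at $v^i(k)$, bound $\|v^i(k)-z\|^2$ by convexity of the squared norm together with the row-stochasticity of the weights $a^i_j$, and sum over $i$. The subtlety you flag is real---the paper defines $d_i(k)=\nabla f_i(x^i(k))$ but its own proof of this lemma invokes the gradient inequality as if $d_i(k)$ were the gradient at $v^i(k)$, exactly the convention you adopt---so your argument matches the paper's, including its resolution of that ambiguity.
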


\begin{proof} See Appendix.
\end{proof}

\noindent The following lemma demonstrates that disagreements between agents go to 0, namely that $|| x^i(k)-x^j(k)||$ as $k\rightarrow \infty$. To assess agent disagreements, we consider agents' disagreements with the average of their decision variables,  

\begin{equation}
    y(k) = \dfrac{1}{m}\sum_{j=1}^mx^j(k). 
\end{equation}

\noindent In view of \eqref{v_i(k)} and \eqref{x^i(k+1)_2}, we have

\begin{equation}
      y(k+1)  = \dfrac{1}{m} \sum_{i=1}^m \sum_{j=1}^m a^i_j(w^i(k))x^j(k) - \dfrac{\alpha_k}{m}\sum_{i=1}^m d_i(k) + \dfrac{1}{m}\sum_{i=1}^m\phi^i(k).
      \label{y(k+1)}
\end{equation}

\begin{lemma} Let the algorithm generate iterates of {$x^i(k)$} by the algorithm \eqref{v_i(k)}-\eqref{x^i(k+1)} and consider $\{y(k)\}_{k \in \field{N}}$ defined in \eqref{y(k+1)}. \\

(a) If the stepsize is decreasing such as $\lim_{k\rightarrow \infty} \alpha_{k} = 0$, thus \\
 
 \begin{align*}
      \lim_{k\rightarrow \infty}||x^i(k)-y(k)||=0 \text{\:for all \:} i.
 \end{align*}

(b) If $\sum_{k=1}^{\infty} \alpha_{k}^2 < \infty $ therefore
\begin{align*}
\sum_{k=1}^{\infty} \alpha_k||x^i(k)-y(k)|| < \infty \text{\:for all \:} i. 
\end{align*}
\label{lemma_(norm(x-y))}
\end{lemma}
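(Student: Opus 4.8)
The plan is to follow the unrolled-dynamics argument of \cite{Ned08}\cite{Ned09}, adapted here to the column-stochastic (non-doubly-stochastic) setting. First I would take the explicit representation \eqref{x^i(k+1)_k_s} with initial index $s=0$, which writes $x^i(k)$ as a transition-matrix-weighted combination of the initial conditions $\sum_j[\Phi(k-1,0)]_j^ix^j(0)$, plus accumulated gradient contributions $\sum_{r}\sum_j[\Phi(k-1,r+1)]_j^i\alpha_rd_j(r)$ together with the current term $\alpha_{k-1}d_i(k-1)$, plus accumulated projection errors involving $\phi^j(r)$ and $\phi^i(k-1)$. Averaging this identity over $i$ produces the analogous representation for $y(k)=\frac1m\sum_ix^i(k)$, in which every transition-matrix entry $[\Phi(\cdot,\cdot)]_j^i$ is replaced by its per-agent average $\frac1m\sum_{l=1}^m[\Phi(\cdot,\cdot)]_j^l$.

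Subtracting the two representations, the disagreement $x^i(k)-y(k)$ becomes a sum of terms each carrying a factor $[\Phi(k-1,s)]_j^i-\frac1m\sum_l[\Phi(k-1,s)]_j^l$. The decisive step is to bound this factor. By Lemma \ref{ConvergenceOfPhi} the entries of $\Phi(k-1,s)$ converge geometrically to limits $\gamma^j(s)$ that are common across agents, so that both $[\Phi(k-1,s)]_j^i$ and its average over agents lie within $C\beta^{k-1-s}$ of $\gamma^j(s)$, whence the factor is at most $2C\beta^{k-1-s}$. Substituting $\|d_j(r)\|\le L$ (from the bound preceding Lemma \ref{phiBounded}) and $\|\phi^j(r)\|\le\alpha_rL$ (from Lemma \ref{phiBounded}), and folding the projection-error contributions into the same geometric-convolution form as the gradient contributions, I obtain a bound of the schematic shape
\begin{align*}
\|x^i(k)-y(k)\| \le 2mC\beta^{k-1}\max_j\|x^j(0)\| + 4CL\sum_{r=0}^{k-2}\beta^{k-2-r}\alpha_r + 4\alpha_{k-1}L,
\end{align*}
that is, a geometrically decaying initial-condition term, a convolution of the geometric sequence $\beta^{\,\cdot}$ with the stepsizes, and a current-step remainder.

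For part (a), since $\alpha_k\to0$ the convolution term vanishes by the first assertion of Lemma \ref{SummationLemma} (with $\rho=\beta$ and $\lambda_r=\alpha_r$), while the geometric term and $4\alpha_{k-1}L$ clearly tend to $0$, giving $\lim_{k\to\infty}\|x^i(k)-y(k)\|=0$. For part (b), I would multiply the displayed bound by $\alpha_k$ and sum over $k$: the initial term contributes $\sum_k\alpha_k\beta^{k-1}<\infty$ because $\{\alpha_k\}$ is bounded (it is square-summable), and the current-step term contributes $\sum_k\alpha_k\alpha_{k-1}\le\frac12\sum_k(\alpha_k^2+\alpha_{k-1}^2)<\infty$. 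The only genuinely delicate piece is the doubly-indexed convolution $\sum_k\alpha_k\sum_r\beta^{k-2-r}\alpha_r$; here I would split the cross product via $\alpha_k\alpha_r\le\frac12(\alpha_k^2+\alpha_r^2)$, after which the $\alpha_r^2$ part is summable by the second assertion of Lemma \ref{SummationLemma} and the $\alpha_k^2$ part is summable because $\sum_r\beta^{k-2-r}\le(1-\beta)^{-1}$ and $\sum_k\alpha_k^2<\infty$.

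I expect the main obstacle to be conceptual rather than computational: because the weight matrix is only column-stochastic, the running average $y(k)$ is neither invariant under the averaging step nor the consensus value the iterates drift toward, so the proof cannot rely on $[\Phi]_j^i\to\frac1m$ as in the doubly-stochastic literature. What rescues the argument is precisely that the limit $\gamma^j(s)$ in Lemma \ref{ConvergenceOfPhi} is common across agents, which forces each agent's entry and the per-agent average to coalesce and makes the disagreement-with-the-average contract geometrically regardless of the actual value of $\gamma^j(s)$. Getting this comparison right, and then handling the interchange of summation order together with the square-summability bookkeeping of the cross terms in part (b), is where the real work lies.
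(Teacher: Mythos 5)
Your proposal follows essentially the same route as the paper's proof: unroll the dynamics via the transition matrices, subtract the averaged representation of $y(k)$, bound $\bigl|[\Phi(k-1,s)]_j^i-\tfrac{1}{m}\sum_l[\Phi(k-1,s)]_j^l\bigr|$ by $2C\beta^{k-1-s}$ through the common limit $\gamma^j(s)$ of Lemma \ref{ConvergenceOfPhi}, invoke the gradient bound and Lemma \ref{phiBounded}, and finish part (a) with Lemma \ref{SummationLemma} and part (b) with the $2\alpha_k\alpha_r\le\alpha_k^2+\alpha_r^2$ splitting plus the geometric-sum bound. The differences from the paper's argument are only in constant and exponent bookkeeping, so the proposal is correct and matches the paper's approach.
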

\begin{proof}
 See Appendix.

\end{proof}

From Lemma \ref{lemma_(norm(x-y))}(a), the following theorem is obtained regarding the convergence rate of $ ||x^i(k)-y(k)||$. As it has been demonstrated that agents' disagreements go to 0, as $k \rightarrow \infty$ (Lemma \ref{lemma_(norm(x-y))}a), this theorem shows the rate to reach agreement on agents' decision variable. 

\begin{theorem}
Following Assumption \ref{Xassump}, there is an $M$ such that $\sum_{j=1}^{m}||x^j(0)|| \leq M$. Let $\epsilon > 0$ be given and let $K$ be the first time that $\alpha_k \leq \epsilon$. Let C be defined as in Lemma 4.3. Then $0< \underset{j \in [m]}{\min} \; \underset{i \in [m]}{\min} \: w^j_i(0) <1$, $\alpha$ is decreasing, and $\lim_{k \rightarrow \infty}\alpha_k = 0$, and for all $k \geq K+3$, we have\\
$ ||x^i(k)-y(k)|| \leq  2mCM((1 - \underset{j \in [m]}{\min} \; \underset{i \in [m]}{\min} \: w^j_i(0)^{B_0})^{(k-1)/B_0}) \\ + 4mCL \alpha_0 \dfrac{((1 - \underset{j \in [m]}{\min} \; \underset{i \in [m]}{\min} \: w^j_i(0)^{B_0})^{(k-K)/B_0})}{1-(1 - \underset{j \in [m]}{\min} \; \underset{i \in [m]}{\min} \: w^j_i(0)^{B_0})^{1/B_0}}
    + 4\alpha_{k-1}L  + \dfrac{4mCL \alpha_0\epsilon}{1-\beta}$.
    \label{convRate_||x^i(k)-y(k)||}
\end{theorem}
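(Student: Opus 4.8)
The plan is to start from the explicit representation of the decision variable in \eqref{x^i(k+1)_k_s}, taking $s=0$, and compare it term-by-term with the average $y(k)$ from \eqref{y(k+1)}. The key observation is that $y(k)$ can be viewed as an averaged version of the same representation, so that subtracting yields a difference expressed entirely through the deviation of the transition matrix entries $[\Phi(k,s)]^j_i$ from their limiting row-average $\gamma^j(s)$. I would then apply the triangle inequality to decompose $\|x^i(k)-y(k)\|$ into three groups of terms: the initial-condition term involving $\sum_j [\Phi(k,0)]^j_i x^j(0)$ versus its average, the accumulated gradient/error terms $\sum_{r=0}^{k-1}\sum_j [\Phi(k,r+1)]^j_i \alpha_r d_j(r)$ (and the analogous $\phi^j(r)$ terms), and the final un-averaged terms $\alpha_k d_i(k)+\phi^i(k)$ together with their averages.

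For the first group I would use Lemma \ref{ConvergenceOfPhi} directly: $|[\Phi(k,0)]^j_i-\gamma^j(0)|\leq C\beta^{k}$ with $\beta=(1-\min\min w^j_i(0)^{B_0})^{1/B_0}$, combined with the bound $\sum_j\|x^j(0)\|\leq M$ from Assumption \ref{Xassump}, to produce the leading $2mCM(\cdots)^{(k-1)/B_0}$ term. For the second group I would bound $\|d_j(r)\|\leq L$ (the gradient bound established before Lemma \ref{phiBounded}) and $\|\phi^j(r)\|\leq \alpha_r L$ (Lemma \ref{phiBounded}), then again use the geometric decay $|[\Phi(k,r+1)]^j_i-\gamma^j(r+1)|\leq C\beta^{k-r-1}$. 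Here is where the splitting at the index $K$ enters: the sum $\sum_{r=0}^{k-1}$ is split into $r<K$ and $r\geq K$. For $r\geq K$ one has $\alpha_r\leq\epsilon$ (by definition of $K$), which after summing the geometric series $\sum\beta^{k-r-1}$ gives the $\frac{4mCL\alpha_0\epsilon}{1-\beta}$ contribution; for $r<K$ one bounds $\alpha_r\leq\alpha_0$ and sums $\beta^{k-r-1}$ from $r=0$ to $K-1$, producing the $4mCL\alpha_0\frac{(\cdots)^{(k-K)/B_0}}{1-(\cdots)^{1/B_0}}$ term. The last group, the un-averaged $\alpha_k d_i(k)+\phi^i(k)$ and its average over $i$, is bounded crudely by $4\alpha_{k-1}L$ using monotonicity of the stepsize and the uniform bounds on $d_i$ and $\phi^i$.

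I expect the main obstacle to be the careful bookkeeping in the second group: correctly splitting $\sum_{r=0}^{k-1}$ at $r=K$, tracking the two separate geometric sums (one anchored at $r=0$ giving a factor $\beta^{k-K}$, one anchored at $r=K$ giving the $\epsilon/(1-\beta)$ factor), and ensuring that the factor-of-$m$ and factor-of-$2$ constants from averaging $y(k)$ against $x^i(k)$ line up to produce exactly the coefficients $4mCL$ and $2mCM$ claimed. A secondary subtlety is justifying that $\gamma^j(s)$ is common across $i$ so that the averaging in $y(k)$ exactly cancels the $\gamma^j$ contributions, leaving only the $[\Phi]^j_i-\gamma^j$ deviations; this relies on Lemma \ref{Lemma_overline{w}} and the structure of $\Phi$ as a product of stochastic matrices. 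The requirement $k\geq K+3$ presumably arises from needing enough iterations past $K$ for the geometric-series bounds to be valid and for the $\alpha_{k-1}$ and $\beta^{(k-K)/B_0}$ exponents to be well-defined and nonnegative; I would verify this index offset at the end once the dominant terms are assembled.
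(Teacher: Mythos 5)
Your proposal is correct and follows essentially the same route as the paper: the paper's proof of this theorem simply recalls the intermediate bound $\|x^i(k)-y(k)\| \leq 2mC\beta^{k-1}\sum_j\|x^j(0)\| + 4mCL\sum_{r=0}^{k-2}\beta^{k-r}\alpha_r + 4\alpha_{k-1}L$, which is established in the appendix proof of Lemma \ref{lemma_(norm(x-y))}(a) by exactly the transition-matrix decomposition, triangle inequality, and applications of Lemmas \ref{ConvergenceOfPhi} and \ref{phiBounded} that you describe, and then performs the same split of the sum at the index $K$ with $\alpha_r\leq\epsilon$ for $r>K$ and $\alpha_r\leq\alpha_0$ for $r\leq K$. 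The only difference is organizational (you re-derive the intermediate bound rather than citing the lemma), and your flagged bookkeeping concerns (the $i$-independence of $\gamma^j(s)$, the index offsets, and the factors of $2$ and $m$) are handled in the paper exactly as you anticipate.
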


\begin{proof} Recall \eqref{norm(x-y)} and $\beta=((1 - \underset{j \in [m]}{\min} \; \underset{i \in [m]}{\min} \: w^j_i(0)^{B_0})^{1/B_0})$:

\begin{align*}
 ||x^i(k)-y(k)||\leq  2mC\beta^{k-1}\sum_{j=1}^m ||x^j(0)|| + 4mCL \sum_{r=0}^{k-2}\beta^{k-r} \alpha_r
    + 4\alpha_{k-1}L.
\end{align*}

\noindent Then, we have

\begin{equation}
 ||x^i(k)-y(k)||\leq  2mCM\beta^{k-1} + 4mCL \sum_{r=0}^{k-2}\beta^{k-r} \alpha_r
    + 4\alpha_{k-1}L. 
    \label{norm(x-y)_2}
\end{equation}



\noindent Suppose we have an arbitrary $\epsilon > 0$ and let $K$ be defined so that $\alpha_r \leq \epsilon$ (since $\alpha_r \rightarrow 0$) for all $k \geq K + 3$. We therefore have

\begin{equation}
\begin{split}
    \sum_{r=0}^{k-2}\beta^{k-r}\alpha_r \leq & \sum_{r=0}^{K}\beta^{k-r}\alpha_r + \epsilon \sum_{r=K+1}^{k-2}\beta^{k-r}  \leq \max_{0 \leq t \leq K} \alpha_t \sum_{r=0}^{K}\beta^{k-r} + 
    \epsilon \sum_{r=K+1}^{k-2}\beta^{k-r}.
    \end{split}
\end{equation}

\noindent Because of $\sum_{r=K+1}^{k-2}\beta^{k-r} \leq \dfrac{1}{1 - \beta}$, we obtain

\begin{equation}
    \sum_{r=0}^{k-2}\beta^{k-r}\alpha_r \leq \max_{0 \leq t \leq K} \alpha_t \sum_{r=0}^{K}\beta^{k-r} +  \dfrac{\epsilon}{1 - \beta}.
\end{equation}

\noindent Similarly, since $\sum_{r=0}^{K}\beta^{k-r} \leq \dfrac{\beta^{k-K}}{1-\beta}$, we obtain for all $k \geq K+3$,

\begin{equation}
    \sum_{r=0}^{k-2}\beta^{k-r}\alpha_r \leq \max_{0 \leq t \leq K} \alpha_t \dfrac{\beta^{k-K}}{1-\beta} +  \dfrac{\epsilon}{1 - \beta}.
    \label{new_sum}
\end{equation}

\noindent Inserting \eqref{new_sum} into \eqref{norm(x-y)_2}, we get for $ k \geq K+3$,

\begin{equation}
\begin{split}
 ||x^i(k)-y(k)|| \leq  2mCM\beta^{k-1} + 4mCL \Big[ \max_{0 \leq t \leq K} \alpha_t \dfrac{\beta^{k-K}}{1-\beta} +  \dfrac{\epsilon}{1 - \beta} \Big]
    + 4\alpha_{k-1}L. 
    \end{split}
    \label{norm(x-y)_3}
\end{equation}



\noindent Because $\alpha_t$ is decreasing, we obtain for all $k\geq K+3$,

\begin{equation}
||x^i(k)-y(k)|| \leq  2mCM\beta^{k-1} + 4mCL \alpha_0 \Big[\dfrac{\beta^{k-K}}{1-\beta} + \dfrac{\epsilon}{1 - \beta} \Big] + 4\alpha_{k-1}L. 
    \label{norm(x-y)_5}
\end{equation} \hfill $\square$
\end{proof}
\begin{sloppypar}
\noindent The convergence rate is affected by the value of $\beta$. Recall $\beta=(1 - \underset{j \in [n]}{\min} \; \underset{i \in [n]}{\min} \: w^j_i(0)^{B_0})^{1/B_0}$, meaning the value of $\beta$ is a function of the minimum initial priority and the number of agents. The convergence rate slows down as the minimum initial agent weight decreases and the number of agents increases. Agents should therefore carefully choose their preferences. A small  initial  priority would make the convergence rate very slow, which can  harm  algorithm performance. This suggests that agents’ priorities must be balanced with need for attaining a high-quality final result with a reasonable convergence rate. Along the same lines, an extremely large team of agents would increase the limit of the convergence rate; as the number of agents increases agents' preferences associated to objective functions tend to be smaller since agents' preferences sum to 1. 
\end{sloppypar}
\noindent Based on Lemmas \ref{norm(x-z)} and \ref{lemma_(norm(x-y))}, the next theorem presents the asymptotic convergence of the proposed algorithm. In distinction to \cite{Ned08}, it is shown that the iterates $x^i(k)$ converge to an optimal solution for an information exchange matrix $A$ that is (non-doubly) stochastic, which weights are obtained from agents' priorities \eqref{Acomputation}.  

\begin{theorem}
The iterates ${x^i(k)}$ are generated by \eqref{ConsensusEq2}-\eqref{x^i(k+1)} with stepsize satisfying conditions of Lemma \ref{lemma_(norm(x-y))}. Assume that the optimal solutions set $X^*$ is nonempty. Therefore, an optimal point $x^* \in X^*$ exists such that

\begin{align*}
    \lim_{k\to\infty}||x^i(k)-x^*||=0 \text{\: for all\:} i.
\end{align*}
\label{lim norm(x^i(k)-x)}
\end{theorem}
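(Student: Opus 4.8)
\noindent The plan is to build a deterministic almost-supermartingale inequality for the aggregate Lyapunov quantity $\sum_{i=1}^m\|x^i(k)-x^*\|^2$, where $x^*\in X^*$ is a fixed optimal point, and then to run a summability argument to extract convergence. The natural starting point is the basic iterate relation of Lemma~\ref{norm(x-z)} evaluated at $z=x^*$: it already exposes a candidate \emph{descent} term $-2\alpha_k\sum_i\big(f_i(v^i(k))-f_i(x^*)\big)$ together with the two controllable remainders $\alpha_k^2\sum_i\|d_i(k)\|^2$ and $-\sum_i\|\phi^i(k)\|^2$. The entire difficulty is concentrated in the averaging term $\sum_i\sum_j a_j^i\,\|x^j(k)-x^*\|^2$, which, unlike in the doubly-stochastic setting, does \emph{not} collapse to $\sum_j\|x^j(k)-x^*\|^2$ because $A(k)$ is only stochastic (its row sums are not $1$).

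\noindent First I would dispose of this averaging term using the two consensus facts already established. By Lemma~\ref{Lemma_overline{w}} the priorities agree in the limit, so $A(k)\to\overline A$, and by Lemma~\ref{lemma_(norm(x-y))}(a) the decision variables agree with their running average, $\|x^j(k)-y(k)\|\to 0$. Writing $\|x^j(k)-x^*\|^2=\|x^j(k)-y(k)\|^2+2\langle x^j(k)-y(k),\,y(k)-x^*\rangle+\|y(k)-x^*\|^2$ and summing against the nonnegative weights $a_j^i$ (Lemma~\ref{Lemma_etaA}), the term quadratic in $y(k)-x^*$ carries the total mass $\sum_{i,j}a_j^i(k)=m$ (each agent's own weights sum to one). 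This mass identity is exactly what \emph{replaces} the doubly-stochastic telescoping. Every surviving piece is a multiple of a disagreement $\|x^j(k)-y(k)\|$, and, after pairing with $\alpha_k$, its sum over $k$ is finite by Lemma~\ref{lemma_(norm(x-y))}(b). The net effect is to rewrite the averaging term as $\sum_j\|x^j(k)-x^*\|^2$ plus a $k$-summable error.

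\noindent Next I would treat the descent term. Using convexity of each $f_i$ (Assumption~\ref{ConvexAssump}), the gradient bound $\|d_i(k)\|\le L$, the error bound of Lemma~\ref{phiBounded}, and $v^i(k)\to y(k)$, I would replace $f_i(v^i(k))$ by $f_i(y(k))$ up to an error proportional to $\|v^i(k)-y(k)\|$, again absorbed into the summable remainder via Lemma~\ref{lemma_(norm(x-y))}. Once the row structure of the limiting matrix $\overline A$ (which encodes the averaged priorities $\overline w$) is accounted for, the surviving descent term is a nonnegative multiple of the optimality gap $g(k)\ge 0$ of the problem actually being solved, and hence nonpositive against $-2\alpha_k$. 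Collecting everything gives $\sum_i\|x^i(k+1)-x^*\|^2\le \sum_i\|x^i(k)-x^*\|^2-2\alpha_k g(k)+\eta_k$, where $\{\eta_k\}$ is summable by $\sum_k\alpha_k^2<\infty$, Lemma~\ref{phiBounded}, Lemma~\ref{lemma_(norm(x-y))}(b), and the convolution bound of Lemma~\ref{SummationLemma}.

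\noindent Finally I would close with a deterministic supermartingale step: since $\{\eta_k\}$ is summable and $\alpha_k g(k)\ge 0$, the sequence $\{\sum_i\|x^i(k)-x^*\|^2\}$ converges for every fixed $x^*\in X^*$ and $\sum_k\alpha_k g(k)<\infty$. Combined with the non-summable stepsize condition $\sum_k\alpha_k=\infty$ (the standard diminishing-stepsize requirement accompanying $\sum_k\alpha_k^2<\infty$), this forces $\liminf_k g(k)=0$; compactness of $X$ (Assumption~\ref{Xassump}) and continuity of the $\overline w$-weighted objective of Problem~\ref{Opt_prob} then yield a subsequence $y(k_\ell)\to\tilde x\in X^*$, and specializing the established convergence of $\sum_i\|x^i(k)-\tilde x\|^2$ upgrades this to $y(k)\to\tilde x$ for the whole sequence. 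Consensus, Lemma~\ref{lemma_(norm(x-y))}(a), then forces $x^i(k)\to\tilde x$ for every $i$, which is the claim with $x^*=\tilde x$. I expect the main obstacle to be precisely the non-doubly-stochastic averaging: the term $\sum_{i,j}a_j^i\|x^j-x^*\|^2$ telescopes for free in the classical analysis but not here, so the argument must lean on the \emph{simultaneous} convergence of two coupled processes --- the priorities (Lemma~\ref{Lemma_overline{w}}) and the decisions (Lemma~\ref{lemma_(norm(x-y))}) --- together with the column-mass identity $\sum_{i,j}a_j^i=m$; a secondary but essential subtlety is verifying that the descent aligns with the $\overline w$-weighted objective rather than the unweighted sum, which is what ties the limit point to the agents' averaged priorities.
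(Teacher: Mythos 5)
Your overall architecture matches the paper's: start from Lemma~\ref{norm(x-z)} at $z=x^*$, replace $f_i(v^i(k))$ by $f_i(y(k))$ at a cost proportional to $\sum_j a_j^i\|x^j(k)-y(k)\|$, sum over $k$, use $\sum_k\alpha_k^2<\infty$ and Lemma~\ref{lemma_(norm(x-y))}(b) to get $\sum_k\alpha_k\,(f(y(k))-f(x^*))<\infty$, invoke $\sum_k\alpha_k=\infty$ for $\liminf_k(f(y(k))-f(x^*))=0$, establish convergence of the scalar sequence $\sum_i\|x^i(k)-x^*\|^2$, and finish by extracting a limit point of $\{y(k)\}$ in $X^*$ and upgrading via consensus. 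Where you depart from the paper is precisely at the step you identify as the crux, the averaging term $\sum_{i,j}a_j^i(k)\|x^j(k)-x^*\|^2$: the paper does not perform your decomposition but instead sums \eqref{norm} over a window $[K,N]$ and retains only the $k=K$ instance of the weighted term in \eqref{norm(x-z*)}, then compares $\limsup_N$ of the left side with $\liminf_K$ of that initial term.

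Your proposed repair of the non-telescoping term has a genuine gap. That term carries no factor of $\alpha_k$ --- it is the memory term of the recursion, not part of the gradient correction --- so after your expansion around $y(k)$ the residual that must be summable \emph{on its own} is $\sum_j\bigl(c_j(k)-1\bigr)\bigl(\|x^j(k)-y(k)\|^2+2\langle x^j(k)-y(k),\,y(k)-x^*\rangle\bigr)$, where $c_j(k)=\sum_i a_j^i(k)$ is a column sum (only the $\|y(k)-x^*\|^2$ piece cancels by the mass identity $\sum_{i,j}a_j^i=m$). The cross term is of order $\|x^j(k)-y(k)\|$ with no accompanying $\alpha_k$, so your appeal to Lemma~\ref{lemma_(norm(x-y))}(b) --- which controls only $\sum_k\alpha_k\|x^j(k)-y(k)\|$ --- does not apply; there is nothing to ``pair with $\alpha_k$.'' The bound \eqref{norm(x-y)} gives $\|x^j(k)-y(k)\|=O(\alpha_{k-1})$ plus geometric and convolution terms, and with $\alpha_k=\alpha_0/k$ none of the paper's lemmas yields $\sum_k\|x^j(k)-y(k)\|<\infty$. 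Hence the almost-supermartingale inequality with summable perturbation $\eta_k$ is not established, and the Robbins--Siegmund-type step your argument rests on does not go through as written; salvaging it would require either summability of the disagreements themselves or a separate argument that $|c_j(k)-\overline{c}_j|$ decays geometrically as the priorities consense (Lemma~\ref{Lemma_overline{w}}), neither of which is in your sketch. A secondary divergence worth noting: the descent term in the paper's proof is the \emph{unweighted} sum $f=\sum_i f_i$ evaluated at $y(k)$ --- the priorities enter only through $A(k)$, not through the gradient steps --- so your claim that the surviving descent aligns with the $\overline{w}$-weighted objective is not what the paper's computation produces.
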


\begin{proof}
From Lemma \ref{norm(x-z)}, we have 
\begin{align*}
    \sum_{i=1}^m|| x^i(k+1)-z||^2 \leq \sum_{i=1}^m\sum_{j=1}^m a_j^i(w^i(k))||x^j(k)-z||^2 + \\ \alpha_k^2\sum_{i=1}^m||d_i(k)||^2 -2\alpha_k\sum_{i=1}^m(f_i(v^i(k))-f_i(z))-\sum_{i=1}^m||\phi^i(k)||^2.\end{align*}

\noindent Using the gradient bound and by removing the last nonpositive term on the right hand side, we get


\begin{equation}
\begin{split}
    \sum_{i=1}^m|| x^i(k+1)-z||^2 \leq \sum_{i=1}^m\sum_{j=1}^m a_j^i(w^i(k))||x^j(k)-z||^2 + \alpha_k^2mL^2 \\ -2\alpha_k\sum_{i=1}^m(f_i(v^i(k))-f_i(y(k)))-2\alpha_k(f(y(k))-f(z)).
    \end{split} \label{sum_norm(x-z)_2} 
    \end{equation}
    
 \noindent Considering the gradient boundedness and the stochasticity of weights, we have
 
 \begin{equation}
 \begin{split}
     |f_i(v^i(k))-f_i(y(k))| \leq & L||v^i(k)-y(k)||  \leq  L\sum_{j=1}^ma_j^i(w^i(k))||x^j(k)-y(k)||.
      \end{split}
     \label{norm(f_i(v^i(k))-f_i(y(k)))}
 \end{equation}
 
 \noindent Summing \eqref{norm(f_i(v^i(k))-f_i(y(k)))} over $m$ and using it in \eqref{sum_norm(x-z)_2}, we obtain,
 
\begin{equation}
\begin{split}
    \sum_{i=1}^m|| x^i(k+1)-z||^2 \leq \sum_{i=1}^m\sum_{j=1}^m a_j^i(w^i(k))||x^j(k)-z||^2 + \alpha_k^2mL^2\\ + 2\alpha_kL\sum_{i=1}^m\sum_{j=1}^ma_j^i(k)||x^j(k)-y(k)||-2\alpha_k(f(y(k))-f(z)).
    \end{split}
    \label{sum(norm(x^i(k+1)-z))}
    \end{equation}
    
\noindent Considering $z=x^* \in X^*$, and by restructuring the terms we get,

\begin{equation}
    \begin{split}
 \sum_{i=1}^m|| x^i(k+1)-x^*||^2 + 2\alpha_k(f(y(k))-f(x^*)) \leq  \\ \sum_{i=1}^m\sum_{j=1}^m a_j^i(w^i(k))||x^j(k)-x^*||^2 + \alpha_k^2mL^2 + \\ 2\alpha_kL\sum_{i=1}^m\sum_{j=1}^ma_j^i(w^i(k))||x^j(k)-y(k)||.    \end{split}
 \label{norm}
 \end{equation}

\noindent By summing \eqref{norm} over an arbitrary window from some positive integer $K$ to $N$ with $ K < N$, we obtain,

\begin{equation}
\begin{split}
    \sum_{i=1}^m|| x^i(N+1)-x^*||^2 + 2\sum_{k=K}^N\alpha_k(f(y(k))-f(x^*)) \leq \\ \sum_{i=1}^m\sum_{j=1}^m a_j^i(w^i(K))||x^j(K)-x^*||^2 + \\ mL^2\sum_{k=K}^N\alpha_k^2 + 2L\sum_{k=K}^N\alpha_k\sum_{i=1}^m\sum_{j=1}^ma_j^i(w^i(k))||x^j(k)-y(k)||. \end{split} \label{norm(x-z*)} \end{equation}
    
\noindent With $K=1$ and $N \to \infty$ in \eqref{norm(x-z*)}, using $\sum_{k=1}^{\infty} \alpha_k^2 < \infty$ and $\sum_{k=1}^{\infty}\alpha_k\sum_{j=1}^m||x^j(k)-y(k)|| < \infty$, which is a result of Lemma \ref{lemma_(norm(x-y))}, we have

\begin{align*}
    \sum_{k=1}^{\infty}\alpha_k(f(y(k))-f(x^*)) < \infty.
\end{align*}

\noindent Because $x^j(k) \in X$ for all $j$, $y(k) \in X$ for all $k$. Given that $x^* \in X^*$, $f(y(k))-f* \geq 0$ for all $k$. As a result of this relation and the assumption that $\sum_{k=1}^{\infty} \alpha_k=\infty$, and $\sum_{k=1}^{\infty}\alpha_k(f(y(k))-f(x^*)) < \infty$, we obtain,

\begin{equation}
    \liminf_{k \to \infty}(f(y(k))-f(x^*)) = 0.
    \label{f(y(k))-f(x^*)}
\end{equation}

The forthcoming development demonstrates that agents converge to the optimal point $x^*$. The nonnegative term in left side hand of  \eqref{norm(x-z*)} can be removed. Therefore, we have

\begin{equation}
\begin{split}
    \sum_{i=1}^m|| x^i(N+1)-x^*||^2  \leq \sum_{i=1}^m\sum_{j=1}^m a_j^i(w^i(K))||x^j(K)-x^*||^2  + \\ mL^2\sum_{k=K}^N\alpha_k^2 + 2L\sum_{k=K}^N\alpha_k\sum_{i=1}^m\sum_{j=1}^ma_j^i(w^i(K))||x^j(k)-y(k)||. \end{split} \end{equation}

\noindent Given that $\sum_{k} \alpha_k^2 < \infty$ and $\sum_{k=1}^{\infty}\alpha_k\sum_{i=1}^m\sum_{j=1}^ma_j^i(w^i(k))||x^i(k)-y(k)|| < \infty$, it results that ${x^i(k)}$ is bounded for each $i$, and

\begin{align*}
    \limsup_{N \to \infty} \sum_{i=1}^m|| x^i(N+1)-x^*||^2 \leq \liminf_{K \to \infty} \sum_{i=1}^m\sum_{j=1}^m a_j^i(w^i(K))||x^j(K)-x^*||^2.
\end{align*}

\noindent This implies that the scalar sequence ${\sum_{i=1}^m||x^i(k)-x^*||}$ converges for every $x^* \in X^*$. \\

\noindent Given that $\lim_{k \to \infty} ||x^i(k)-y(k)|| = 0$ (Lemma \ref{lemma_(norm(x-y))}), $\{y(k)\}_{k \in \field{N}}$ is bounded and the scalar sequence ${|| y(k) - x^*||}$ is convergent for $x^* \in X^*$. \\

\begin{sloppypar}
\noindent Because $y(k)$ is bounded, $y(k)$  has a limit point. From \eqref{f(y(k))-f(x^*)}, we have $\liminf_{k \to \infty}f(y(k))=f^*$. Considering the previous equality and the continuity of $f$, one of the limit points of $\{y(k)\}$ must be in $X^*$, which is denoted by $x^*$. Therefore, $||y(k)-x^*||$ is convergent. Thus, $\lim_{k \to \infty} y(k) = x^*$ and $\lim_{k \to \infty} ||x^i(k)-y(k)|| = 0$, which implies that each sequence $\{x^i(k)\}$ converges to the same $x^* \in X^*$. \hfill $\square$
\end{sloppypar}
\end{proof}

From Theorem \ref{lim norm(x^i(k)-x)} and Lemma \ref{a_i(k+1)}, the following convergence rate is obtained. 

\begin{theorem}
Let $\epsilon > 0$ be given and let $K$ be the first time that $\alpha_k \leq \epsilon$. Using Lemma \ref{a_i(k+1)} and Theorem \ref{convRate_||x^i(k)-y(k)||}, we have for $s \geq K+3$, 

\begin{equation}
\begin{split}
    \sum_{i=1}^m|| x^i(k+1)-x^*||^2 \leq \sum_{i=1}^m\sum_{j=1}^m  q^i_j\omega_{max}^{(k+1)} ||x^j(s)-x^*||^2  \\ +  \sum_{r=s}^k \sum_{i=1}^m\sum_{j=1}^m q^i_j\omega_{max}^{(k+1-r)} \alpha_rL \Big[ \alpha_rL +  4mCM\beta^{r-1} \\ + 8mCL\alpha_0 \Big[\dfrac{\beta^{r-K}}{1-\beta} + \dfrac{\epsilon}{1-\beta}\Big] + 8\alpha_{r-1}L].  
    \end{split}
    \label{ConvergenceRate_xstar}
\end{equation}

\end{theorem}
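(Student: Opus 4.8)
The plan is to reuse the one-step descent recursion that is already assembled inside the proof of Theorem~\ref{lim norm(x^i(k)-x)}. Concretely, I would begin from Lemma~\ref{norm(x-z)} evaluated at $z=x^*$, discard the nonpositive term $-\sum_i\|\phi^i(k)\|^2$, invoke the gradient bound $\|d_i(k)\|\le L$, apply the Lipschitz estimate~\eqref{norm(f_i(v^i(k))-f_i(y(k)))}, and drop $-2\alpha_k(f(y(k))-f(x^*))$ since $f(y(k))-f(x^*)\ge 0$ on $X$. This recovers the scalar recursion
\begin{align*}
\sum_{i=1}^m\|x^i(k+1)-x^*\|^2 &\le \sum_{i=1}^m\sum_{j=1}^m a_j^i(w^i(k))\|x^j(k)-x^*\|^2 + \alpha_k^2 m L^2 \\
&\quad + 2\alpha_k L\sum_{i=1}^m\sum_{j=1}^m a_j^i(k)\,\|x^j(k)-y(k)\|,
\end{align*}
which is the object to be unrolled.

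Next I would bound the two ingredients separately. For the coupling weights, the structure in~\eqref{a_i(k+1)}--\eqref{Acomputation} together with Lemma~\ref{Lemma_etaA} gives $a_j^i(k)=0$ for $(i,j)\notin E$, and because the priority averaging keeps every $w^i_j(k)$ inside the convex hull of the initial priorities, each active weight is dominated by the largest priority $\omega_{max}$; combining the support pattern with this upper bound yields the entrywise estimate $a_j^i(k)\le q_j^i\,\omega_{max}$ that appears in~\eqref{ConvergenceRate_xstar}. For the disagreement terms I would substitute the explicit bound on $\|x^j(k)-y(k)\|$ furnished by Theorem~\ref{convRate_||x^i(k)-y(k)||}; doubling it (to absorb the prefactor $2\alpha_k L$) produces exactly $4mCM\beta^{r-1}+8mCL\alpha_0[\beta^{r-K}/(1-\beta)+\epsilon/(1-\beta)]+8\alpha_{r-1}L$, while the gradient term $\alpha_k^2mL^2$ supplies the remaining $\alpha_r L$ inside the bracket.

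Finally I would iterate the recursion from $s$ up to $k$. The homogeneous part transports $\sum_{i,j}q_j^i\,\omega_{max}^{(\cdot)}\|x^j(s)-x^*\|^2$ forward, each additional application contributing another factor dominated by $q_j^i\omega_{max}$; this is what generates the geometric powers $\omega_{max}^{(k+1)}$ on the initial term and $\omega_{max}^{(k+1-r)}$ on the $r$-th forcing term, and the accumulated forcing terms assemble into the single sum over $r$ in~\eqref{ConvergenceRate_xstar}.

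The main obstacle is the bookkeeping of this matrix-product propagation: one must verify that repeated left-multiplication by the (non-doubly) stochastic matrices $A(k)$ stays dominated entrywise by $q_j^i\omega_{max}^{(\cdot)}$ on the correct support, and that the exponent alignment between $\omega_{max}^{(k+1)}$ and $\omega_{max}^{(k+1-r)}$ is carried through consistently, using $\omega_{max}<1$ to absorb the shift by $s$. Particular care is needed on the diagonal, where the self-weight $a_i^i(k)$ collects the mass of all non-neighbors and must still respect the $\omega_{max}$ bound underlying the domination; this is the one place where the argument could fail to be tight, so I would check it first.
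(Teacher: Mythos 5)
Your proposal follows essentially the same route as the paper's proof: start from the recursion \eqref{sum(norm(x^i(k+1)-z))} with $z=x^*$, drop the nonpositive term, bound $a_j^i(k)\le q_j^i\,\omega_{max}$ entrywise, substitute the disagreement bound of Theorem \ref{convRate_||x^i(k)-y(k)||} (doubled to absorb the $2\alpha_k L$ prefactor), and unroll from $s$ to $k$ to generate the powers $\omega_{max}^{(k+1)}$ and $\omega_{max}^{(k+1-r)}$. The only discrepancy is your reading of $\omega_{max}$ as the largest priority: the paper sets $\omega_{max}=1-\min_{j}\min_{i} w^j_i(0)$ precisely because of the diagonal aggregation $a_i^i(k)=1-\sum_{(i,j)\in E}w_j^i(k)$ that you flag as the point to check, so that check resolves exactly as the paper's definition anticipates.
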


\begin{proof}

From \eqref{sum(norm(x^i(k+1)-z))}, we have

\begin{equation}
\begin{split}
    \sum_{i=1}^m|| x^i(k+1)-z||^2 \leq \sum_{i=1}^m\sum_{j=1}^m a_j^i(w^i(k))||x^j(k)-z||^2 + \alpha_k^2mL^2\\ + 2\alpha_kL\sum_{i=1}^m\sum_{j=1}^ma_j^i(k)||x^j(k)-y(k)||-2\alpha_k(f(y(k))-f(z)).
    \end{split}
\end{equation}

\noindent Dropping the last negative term, we find

\begin{equation}
\begin{split}
    \sum_{i=1}^m|| x^i(k+1)-x^*||^2 \leq \sum_{i=1}^m\sum_{j=1}^m a_j^i(w^i(k))||x^j(k)-x^*||^2 + \alpha_k^2mL^2\\ + 2\alpha_kL\sum_{i=1}^m\sum_{j=1}^ma_j^i(k)||x^j(k)-y(k)||.
    \end{split}
\end{equation}

\noindent Re-arranging the terms, we have

\begin{equation}
\begin{split}
    \sum_{i=1}^m|| x^i(k+1)-x^*||^2 \leq \sum_{i=1}^m\sum_{j=1}^m a_j^i(w^i(k)) \Bigg[||x^j(k)-x^*||^2 \\ + \alpha_k^2L^2 + 2\alpha_kL||x^j(k)-y(k)|| \Bigg].
    \end{split}
\end{equation}

\noindent Define $\omega(k)= \underset{j \in [n]}{\text{max}} \; \underset{i \in [n]}{\text{max}} \: a^i_j(k)$. Therefore, the maximum value that $\omega(k)$ can take is $\omega_{max} = 1 - \underset{j \in [n]}{\min} \; \underset{i \in [n]}{\min} \: w^j_i(0))$. We therefore obtain

\begin{equation}
\begin{split}
    \sum_{i=1}^m|| x^i(k+1)-x^*||^2 \leq \sum_{i=1}^m\sum_{j=1}^m  q^i_j\omega_{max} \Bigg[||x^j(k)-x^*||^2 +\\ \alpha_k^2L^2 + 2\alpha_kL||x^j(k)-y(k)|| \Bigg].
    \end{split}
\end{equation}

\noindent Using Theorem \ref{convRate_||x^i(k)-y(k)||}, we obtain 


\begin{equation}
\begin{split}
    \sum_{i=1}^m|| x^i(k+1)-x^*||^2 \leq \sum_{i=1}^m\sum_{j=1}^m  q^i_j\omega_{max}^{(k+1)} ||x^j(s)-x^*||^2  \\ +  \sum_{r=s}^k \sum_{i=1}^m\sum_{j=1}^m q^i_j\omega_{max}^{(k+1-r)} \alpha_rL \Big[ \alpha_rL +  4mCM\beta^{r-1} \\ + 8mCL\alpha_0 \Big[\dfrac{\beta^{r-K}}{1-\beta} + \dfrac{\epsilon}{1-\beta}\Big] + 8\alpha_{r-1}L].  
    \end{split}
    \label{ConvergenceRate_xstar}
\end{equation} \hfill $\square$

\end{proof}

\begin{sloppypar}
The convergence rate is determined by $\omega_{max}$. Since $\omega_{max} = 1 - \underset{j \in [n]}{\min} \; \underset{i \in [n]}{\min} \: w^j_i(0)$, the initial agents' weights influence the convergence rate. If the smallest initial weight is extremely small, it could be detrimental for the algorithm performance as it would slow down significantly the convergence rate. Agents should consider balancing their need for reaching a high-quality final result and reasonable convergence rate. Agents should avoid extreme difference in their highest and lowest priorities. 
\end{sloppypar}

\section{Numerical results}
\label{Numerical results}

Three simulation scenarios are run to illustrate the performance of the proposed algorithm. The numerical studies considers quadratic functions defined as, 

\begin{equation}
    f_i(x) = \frac{1}{2}x^TQ_ix + r_i^Tx + c_i,
    \label{quad_func}
\end{equation}

\noindent where ~$x \in \mathbb{R}^n$ is the decision vector, $Q \in \mathbb{R}^{n \times n}$ is a symmetric positive definite matrix, ~$r \in \mathbb{R}^n$, and $i=1, \ldots, n$. The matrix~$Q_i$ and the vector~$r_i$ and $c_i$ are generated randomly and unique for each agent. An agent $i$ knows exclusively the objective function $f_i$. The agents goal is to solve the following problem using \eqref{w_i(k+1)}-\eqref{x^i(k+1)}:

\begin{equation}
     \underset{}{\textnormal{minimize}} \sum_{i=1}^m w_if_i(x).
     \label{Problem1}
\end{equation}

\begin{align*}
    \textnormal{subject} \: to \: x \in [-1,000 \:\text{to} \: 1,000].
\end{align*}

For all scenario, the initial gradient step size is $\alpha_0$ = 0.2 and we let $\alpha_k = \dfrac{\alpha_0}{k}$. 

\subsection{First simulation scenario}

The first simulation aims to show the exploration of the Pareto Front by the algorithm. For illustrative purposes, the team has two agents and the number of decision variables is 10. Simulations with different initial agent priorities have been performed with identical initial states. Agents exchange information 100,000 times. Fig. \ref{ParetoFront} presents points on the Pareto Front obtained. 

\begin{figure}[]
\centering
   \includegraphics[width=0.85\textwidth,height=0.60\textwidth]{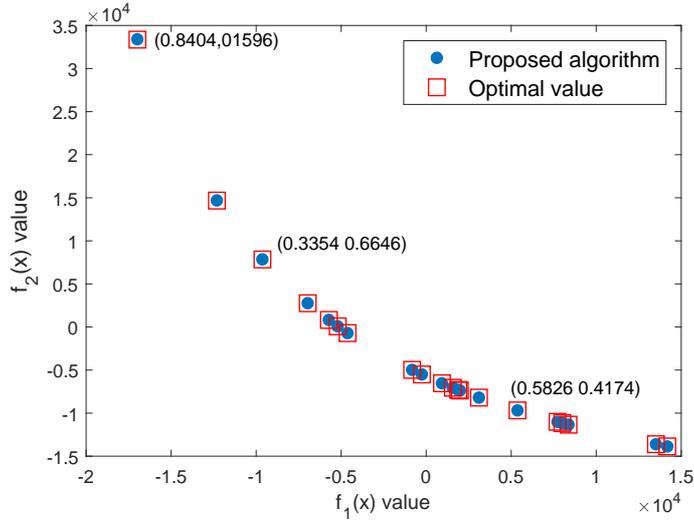}
   \caption{$f_2(x)$ in terms of $f_1(x)$ for a team of 2 agents. It can be seen that the proposed algorithm reaches various optimal solutions. Some of agents' priorities are shown. As the priority of $f_1$ diminishes, the value of $f_1$ increases and the value of $f_2$ decreases. }
    \label{ParetoFront}
\end{figure}   

Optimal solutions in different senses are reached by the agents. The network controls the exploration of the Front through the assignment of priorities, providing a wider range of "optimal" solutions in different senses.

\subsection{Second simulation scenario}
The second scenario aims to show graphically the convergence rate towards the optimal solution, i.e., \eqref{ConvergenceRate_xstar}. The team consists of three agents, $m=3$, and the decision vector has 10 variables, i.e., $n=10$. The team minimizes 10 quadratic functions as defined by \eqref{quad_func} to solve \eqref{Problem1}. The network exchanges information 100,000 times. Table \ref{ConsensusOnWeights} shows the initial agent preferences, $w_i$, and the convergence of the priority vector, $\Bar{w}$. The sum of each agent priorities equals 1, i.e., $\sum_{i=1}^3 w_i^j =1$. 

\begin{table}[!t]
\caption{Initial priorities and priorities value for consensus}
\label{ConsensusOnWeights}
\centering
\begin{tabular}{lcccc} \hline
& \multicolumn{3}{c}{Agent} & \\
 & 1 & 2 & 3 & $\Bar{w}$\\ \hline
$w_1$ & 0.3495 & 0.2232 & 0.6315  & 0.4014 \\ \hline
$w_2$ & 0.3027 & 0.3838 & 0.2494 & 0.3119 \\ \hline
$w_3$ & 0.3478 & 0.3930 &  0.1191  & 0.2866 \\ \hline
\end{tabular}
\end{table}

Table \ref{decision_vector} presents the results obtained by the proposed algorithm. The first three columns correspond to the initial decision vector of each agent. The fourth column presents the final average estimate reached by the agents, i.e., $y(k)$ for $k \rightarrow \infty$. The last column shows the optimal solution. 

\begin{table}[!t]
\caption{Result obtained by the proposed algorithm for Scenario 1}
\label{decision_vector}
\centering
\begin{tabular}{ccccc} \hline
$x_1$ & $x_2$ & $x_3$ & $\hat{x}$ & $x^*$  \\ \hline
 -728.77 & -284.03 & -981.79 & 16.72 & 16.71 \\
  -94.90 & 406.18 &  951.03 & -0.53 & -0.53 \\
  429.65 & 792.26 & -532.88 & -9.03 &   -9.02 \\
   14.82 & -360.08 &  73.41 & 5.52 & 5.52 \\
  846.91 & -797.02 &  147.99 & -4.74 & -4.74 \\
 -789.88 & 986.15 & -602.51 & 8.16 & 8.15 \\
 -285.74 & 723.87 & -584.77 &  1.01 & 1.01 \\
 -820.97 &  39.10 & -30.25 &  -5.07 &  -5.07 \\
  634.15 & -431.47 & 888.04 & -13.59 & -13.58 \\
 -352.03 & 361.52 &  -10.59    & -8.68 &  -8.67 \\  \hline
\multicolumn{3}{c}{$f(x)$} & -6.1094e+03 & -6.1094e+03 \\ \hline
\end{tabular}
\end{table}
             
The results obtained by the proposed algorithm closely approach the optimal value, $x^*$. 
Fig. \ref{Team3agents_10var_convergence_rate} presents the algorithm's convergence rate calculated with \eqref{ConvergenceRate_xstar} and $K = 1$. 

   
\begin{figure}[]
\centering
   \includegraphics[width=0.75\textwidth,height=0.45\textwidth]{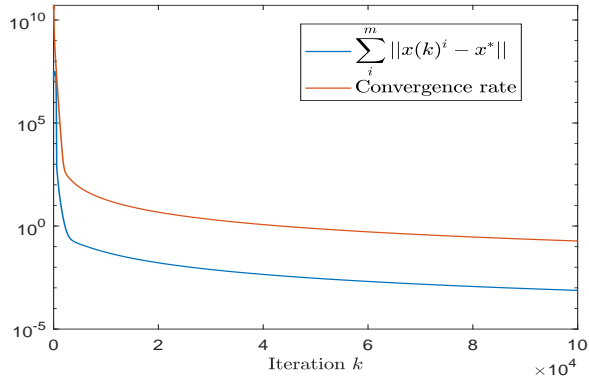}
   \caption{Convergence rate of the proposed algorithm - Team of 3 agents. The agents converge towards the optimal solution.}
    \label{Team3agents_10var_convergence_rate}
\end{figure}   

\subsection{Third simulation scenario}
 
The third simulation scenario objective is to demonstrate the proposed algorithm's efficiency on a larger team of agents and higher number of decision variables. The team consists of 100 agents with quadratic functions defined by \eqref{quad_func} of 100 variables. Therefore, the agent teams solve Problem \ref{Opt_prob} where $m=100$. The set of constraints is the same as scenario 1 and the quadratic functions are also created randomly. Fig. \ref{Team100agents_100var_convergence_rate} displays $\sum_{i=1}^m|| x^i(k+1)-x^*||^2$ over the course of the algorithm. As $k$ $\rightarrow$ $\infty$, the $\sum_{i=1}^m|| x^i(k+1)-x^*||^2$ $\rightarrow$ 0, which means the agent team approximately reach the optimal solution. Indeed, $f(x^*) = -50.11$ and $f(\hat{x}) = -49.64$. 

\begin{figure}[]
\centering
   \includegraphics[width=0.75\textwidth,height=0.45\textwidth]{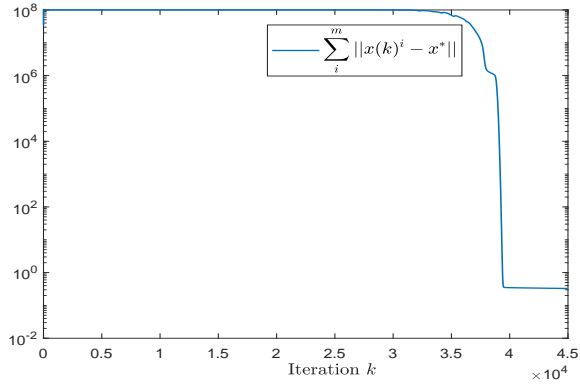}
   \caption{Convergence rate of the proposed algorithm for a team of 100 agents. During the first iterations, agents' decision variables are close to the boundary of the constraint set, which explains the values  $\sum_{i=1}^m|| x^i(k+1)-x^*||^2$ obtained during the first iterations. It takes several iterations to move agents away from the boundaries. However, once agents are far from the boundaries, agents converge quickly towards the optimal solution.}
    \label{Team100agents_100var_convergence_rate}
\end{figure} 

 A plateau followed by sharp drop is observed in the curve. The high bound for the gradient compared to the constraint set explains this phenomenon. Since the gradient can take large values, the decision variables may take large values too. However, the $x_i$ is subject to the constraint set $[-1,000 \; \textnormal{to} \; 1,000]$. Therefore, during the first iterations, most of the decision variables are projected onto the limits of the constraint set. It takes several iterations before a significant number of agents move away from the boundary of the constraint set. However, once this number is reached, the algorithm converges quickly towards the optimal solution. 

\section{Conclusions}
\label{Conclusion}
 
 In this paper, a distributed algorithm to optimize a prioritized sum of convex objective functions was proposed. The proposed algorithm allows agents to have different priorities regarding other agents' objective functions. These agents' priorities enable the exploration of the Pareto Front, which provides optimal solutions in different senses. A rule based on agents' priority generates the information exchange matrix used to update agents' estimates. In the proposed algorithm, this matrix is stochastic, whereas, in most other distributed algorithms, the information exchange matrix is doubly-stochastic. Therefore, new theoretical analyses were needed because of the difference in the network-level set-up. It has been proved that the proposed algorithm converged towards the optimal solution. Also, convergence rates were obtained, which are influenced by agents' initial weights. Numerical results illustrated the performance of the proposed algorithm. Future works include time-varying topology and implementing the algorithm on a team of robots.

\begin{acknowledgements}
Maude J. Blondin would like to thank the support of Fonds de recherche Nature et technologies postdoctoral fellowship.
\end{acknowledgements}

\appendix  
\section*{Appendix: }
This appendix contains the proofs for some lemmas presented in the paper. \\
\textit{Proof of Lemma \ref{Lemma_etaA}} \cite{Blondin2020a}

Define $ \mu(k) := \underset{j \in [m]}{\min} \; \underset{i \in [m]}{\min} \: w^j_i(k)$. Then, $W(k+1) = PW(k)$ can be expressed as

\begin{equation}
\begin{split}
\begin{bmatrix} 
p_{1}^1 &\ldots &p_{1}^m \\
\vdots & \vdots & \vdots \\
p_{i}^1 &\ldots &p_{i}^m \\
\vdots & \vdots & \vdots \\
p_{m}^1 &\ldots &p_{m}^m 
\end{bmatrix}
\quad
    \begin{bmatrix} 
\mu(k) + \delta_{1}^1(k) &\ldots &\mu(k) + \delta_{1}^m(k) \\
\vdots & \vdots & \vdots \\
\mu(k) + \delta_{i}^1(k) &\ldots & \mu(k) + \delta_{i}^m(k) \\
\vdots & \vdots & \vdots \\
\mu(k) + \delta_{m}^1(k) &\ldots & \mu(k) + \delta_{m}^m(k)
\end{bmatrix} \\
\quad =
    \begin{bmatrix} 
w_{1}^1(k+1) &\ldots &w_{1}^m(k+1) \\
\vdots & \vdots & \vdots \\
w_{i}^1(k+1) &\ldots &w_{i}^m(k+1) \\
\vdots & \vdots & \vdots \\
w_{m}^1(k+1) &\ldots &w_{m}^m(k+1) 
\end{bmatrix},
\end{split}
\end{equation}

\noindent where $\delta_{i}^j(k) = w_{i}^j(k)-\mu(k) \geq 0$ for $i,j = \{1, \ldots, n\} $. Then, we have

\begin{equation}
\begin{split}
    w_{i}^j(k+1) = & \sum_{n=1}^m p_{i}^n [\mu(k) + \delta_{n}^j(k)] = \sum_{n=1}^m p_{i}^n \mu(k) + \sum_{n=1}^m p_{i}^n\delta_{n}^j(k) \\& = \mu(k) \sum_{n=1}^m p_{i}^n  + \sum_{n=1}^m p_{i}^n\delta_{n}^j(k).
    \end{split}
\end{equation}

\noindent By definition, we know that $\sum_{m=1}^n p_{i}^m =1$. Therefore, we get 
\begin{equation}
    w_{i}^j(k+1) = \mu(k) + \sum_{n=1}^m p_{i}^n\delta_{n}^j(k).
\end{equation}

\noindent Since $\delta_{n}^j \geq 0 $ and $p_i^n \geq 0$ for $i,j,n= \{1, \ldots,m\}$,  $w_{i}^j(k+1) \geq \mu(k) =\underset{j \in [m]}{\min} \; \underset{i \in [m]}{\min} \: w^j_i(k)$ for $i,j = \{1, \ldots, m \}$ and all $k$.  This establishes that the minimum of $W(k)$ is non-decreasing and other agents cannot go below the previous minimum at the next time step. 

\begin{sloppypar}
Therefore, since \eqref{Acomputation} defines $A(k)$, the smallest non-zero element of $A(k)$, denoted $\underset{i \in [m]}{\min^+ \:}\underset{j \in [m]}{\min^+} [A(k)]^j_i$, is at least $\underset{i \in [m]}{\min \:} \underset{j \in [m]}{\min} w_i^j(k)$. This directly implies that the lower bound can be set as $\underset{j \in [m]}{\min} \; \underset{i \in [m]}{\min} \: w^j_i(0)$. \hfill $\square$

\end{sloppypar} 

\noindent \textit{Proof of Lemma \ref{norm(x-z)}} \\
From Lemma \ref{norm(projection-y)} and since $x^i(k+1)=P_{X_i}[v^i(k)-\alpha_k d_i(k)]$, we have
\begin{align*}
    ||x^i(k+1)-z||^2 \leq ||v^i(k)-\alpha_kd_i(k)-z||^2- \\ ||P_{X_i}[v^i(k)-\alpha_kd_i(k)]-v^i(k)-\alpha_kd_i(k)||^2.
\end{align*}

\noindent From the definition of $\phi^i(k)$ in \eqref{phi^i(k+1)}, the previous relation becomes,
\begin{align*}
    ||x^i(k+1)-z||^2 \leq ||v^i(k)-\alpha_kd_i(k)-z||^2-||\phi^i(k)||^2.
\end{align*}

\noindent By expanding $||v^i(k)-\alpha_k d_i(k)-z||^2$, we have
\begin{equation}
\begin{split}
    ||v^i(k)-\alpha_k d_i(k)-z||^2 = ||v^i(k)-z||^2+\alpha_k^2||d_i(k)||^2\\-2\alpha_k d_i(k)'(v^i(k)-z).
    \end{split}
     \label{norm(v-alpha d(k)-z)}
   \end{equation}

\noindent Because $d_i(k)$ is the gradient of $f_i(x)$ at $x=v^i(k)$, we obtain from convexity that 

\begin{equation}
  d_i(k)'(v^i(k)-z) \geq f_i(v^i(k))-f_i(z).
  \label{d(k)(v(k)-z)}
\end{equation}

\noindent By bringing together \eqref{norm(v-alpha d(k)-z)} and \eqref{d(k)(v(k)-z)}, we get

\begin{equation}
    \begin{split}
    ||x^i(k+1)-z||^2 \leq ||v^i(k)-z||^2 + \alpha_k^2||d_i(k)||^2 \\ - 2\alpha_k[f_i(v^i(k)-f_i(z))] - ||\phi^i(k)||^2.
    \end{split}
    \label{norm(x-z)2}
\end{equation}

\noindent Given the definition of $v^i(k)$, using the convexity of the norm squared function and the stochasticity of the $a^i(w^i(k))$, we find that

\begin{equation}
 ||v^i(k)-z||^2 \leq \sum_{j=1}^m a^i_j(w^i(k))||x^j(k)-z||^2.
 \label{norm(v-z)}
\end{equation}

\noindent It then follows from \eqref{norm(x-z)2} and \eqref{norm(v-z)} that

\begin{equation}
\begin{split}
    ||x^i(k+1)-z||^2 = \sum_{j=1}^m a^i_j(w^i(k))||x^j(k)-z||^2+\alpha_k^2||d_i(k)||^2 \\ -2\alpha_k [f_i(v^i(k))-f_i(z)] - ||\phi^i(k)||^2. 
    \end{split}
    \label{||x^i(k+1)-z||^2}
\end{equation}

\noindent By summing \eqref{||x^i(k+1)-z||^2} over $i=1, \ldots, m$, we obtain the desired relation:

\begin{align*}
\begin{split}
    \sum_{i=1}^m||x^i(k+1)-z||^2 \leq \sum_{i=1}^m\sum_{j=1}^m a^i_j(w^i(k))||x^j(k)-z||^2+ \\ \alpha_k^2\sum_{i=1}^m||d_i(k)||^2 -2\alpha_k\sum_{i=1}^m[f_i(v^i(k))-f_i(z)] - \sum_{i=1}^m||\phi^i(k)||^2. 
    \end{split}
\end{align*} \hfill $\square$

\textit{Proof of Lemma \ref{lemma_(norm(x-y))}} \\
\noindent (a) From \eqref{x^i(k+1)_k_s}, we have, 

\begin{equation}
\begin{split}
      x^i(k)=\sum_{j=1}^m [\Phi(k-1,s)]_j^ix^j(s)-\sum_{r=s}^{k-2}\sum_{j=1}^m[\Phi(k-1,r+1)]_j^i\alpha_rd_j(r) - \\ \alpha_{k-1}d_i(k-1) +  \sum_{r=s}^{k-2}\sum_{j=1}^m[\Phi(k-1,r+1)]_j^i\phi^j(r)+\phi^i(k-1).
      \end{split}
      \label{(x(k))_k_s}
\end{equation}

\noindent Using the following transition matrices 
\begin{equation}
    \Phi(k,s) = A(W(k))A(W(k-1)), \ldots, A(W(s))
\end{equation}
\noindent  and following the same logic to obtain \eqref{x^i(k+1)_k_s} \cite{Ned09}, \eqref{y(k+1)} can be re-written for all $k$ and $s$ with $k > s$ as, 



\begin{equation}
\begin{split}
        y(k) = & \dfrac{1}{m}\sum_{i=1}^m\sum_{j=1}^m[\Phi(k-1,s)]_j^ix^j(s)-  \dfrac{1}{m}\sum_{i=1}^m\sum_{r=s}^{k-2}\sum_{j=1}^m[\Phi(k-1,r+1)]_j^i\alpha_rd_i(r) + \\ & \dfrac{1}{m}\sum_{i=1}^m\sum_{r=s}^{k-2}\sum_{j=1}^m[\Phi(k-1,r+1)]_j^i\phi^i(r)-  \dfrac{\alpha_k}{m}\sum_{i=1}^md_i(k-1)+\dfrac{1}{m}\sum_{i=1}^m\phi^i(k-1).
\end{split}
\label{y(k)_k_s}
\end{equation} 

\noindent By subtracting \eqref{y(k)_k_s} from \eqref{(x(k))_k_s}, we obtain,

\begin{equation}
\begin{split}
    x^i(k)-y(k) &= \sum_{j=1}^m[\Phi(k-1,s)]^i_jx^j(s)-\dfrac{1}{m}\sum_{i=1}^m\sum_{j=1}^m[\Phi(k-1,s)]_j^ix^j(s) \\ & 
    -\sum_{r=s}^{k-2}\sum_{j=1}^m[\Phi(k-1,r+1)]_j^i\alpha_rd_j(r)+  \dfrac{1}{m}\sum_{i=1}^m\sum_{r=s}^{k-2}\sum_{j=1}^m[\Phi(k-1,r+1)]_j^i\alpha_rd_j(r) \\ & 
    -  \alpha_{k-1}d_i(k-1)+\dfrac{\alpha_{k-1}}{m}\sum_{i=1}^md_i(k-1) 
    + \sum_{r=s}^{k-2}\sum_{j=1}^{m}[\Phi(k-1,r+1)]_j^i\phi^j(r) \\ & -  \dfrac{1}{m}\sum_{i=1}^m\sum_{r=s}^{k-2}\sum_{j=1}^m[\Phi(k-1,r+1)]_j^i\phi^j(r)  
    +\phi^i(k-1)-\dfrac{1}{m}\sum_{i=1}^m\phi^i(k-1).
    \end{split}
    \label{x^i(k)-y(k)}
\end{equation}

\noindent Taking the norm of \eqref{x^i(k)-y(k)}, we get 

\begin{equation}
\begin{split}
 ||x^i(k)-y(k)||\leq \sum_{j=1}^m \left| [\Phi(k-1,s)]^i_j-\dfrac{1}{m}\sum_{i=1}^m[\Phi(k-1,s)]_j^i \right| ||x^j(s)|| \\
    + \sum_{r=s}^{k-2}\sum_{j=1}^m\left[ \left| [\Phi(k-1,r+1)]_j^i-\dfrac{1}{m}\sum_{i=1}^m[\Phi(k-1,r+1)]_j^i \right| \alpha_r||d_j(r)|| \right]\\
    + \alpha_{k-1}||d_i(k-1)||+\dfrac{\alpha_{k-1}}{m}\sum_{i=1}^m||d_i(k-1)|| \\
    + \sum_{r=s}^{k-2}\sum_{j=1}^{m}\left[ \left| [\Phi(k-1,r+1)]_j^i-\dfrac{1}{m}\sum_{i=1}^m[\Phi(k-1,r+1)]_j^i \right| ||\phi^j(r)|| \right] \\ 
    +||\phi^i(k-1)||+\dfrac{1}{m}\sum_{i=1}^m||\phi^i(k-1)||.
    \end{split}
    \label{norm(x^i(k)-y(k))}
\end{equation}

\noindent Using Lemma \ref{ConvergenceOfPhi} and for $s=0$, and $k \rightarrow \infty$, the first right-hand term of \eqref{norm(x^i(k)-y(k))}  is

\begin{equation}
\begin{split}
 ||x^i(k)-y(k)||\leq \\ 
 \sum_{j=1}^m \left[ \left| [\Phi(k-1,0)]^i_j -\gamma_j(0) \right| + \left|\dfrac{1}{m}\sum_{i=1}^m[\Phi(k-1,0)]_j^i - \gamma_j(0) \right| \right] ||x^j(0)|| \\
    + \sum_{r=0}^{k-2}\sum_{j=1}^m\left[ \left| [\Phi(k-1,r+1)]_j^i-\dfrac{1}{m}\sum_{i=1}^m[\Phi(k-1,r+1)]_j^i \right| \alpha_r||d_j(r)|| \right]\\
    + \alpha_{k-1}||d_i(k-1)||+\dfrac{\alpha_{k-1}}{m}\sum_{i=1}^m||d_i(k-1)|| \\
    + \sum_{r=0}^{k-2}\sum_{j=1}^{m}\left[ \left| [\Phi(k-1,r+1)]_j^i-\dfrac{1}{m}\sum_{i=1}^m[\Phi(k-1,r+1)]_j^i \right| ||\phi^j(r)|| \right] \\ 
    +||\phi^i(k-1)||+\dfrac{1}{m}\sum_{i=1}^m||\phi^i(k-1)||,
    \end{split}   
\end{equation}

\noindent which can be simplified as, 

\begin{equation}
\begin{split}
 ||x^i(k)-y(k)||\leq 
 2mC\beta^{k-1}\sum_{j=1}^m ||x^j(0)|| \\
    + \sum_{r=0}^{k-2}\sum_{j=1}^m\left[ \left| [\Phi(k-1,r+1)]_j^i-\dfrac{1}{m}\sum_{i=1}^m[\Phi(k-1,r+1)]_j^i \right| \alpha_r||d_j(r)|| \right]\\
    + \alpha_{k-1}||d_i(k-1)||+\dfrac{\alpha_{k-1}}{m}\sum_{i=1}^m||d_i(k-1)|| \\
    + \sum_{r=0}^{k-2}\sum_{j=1}^{m}\left[ \left| [\Phi(k-1,r+1)]_j^i-\dfrac{1}{m}\sum_{i=1}^m[\Phi(k-1,r+1)]_j^i \right| ||\phi^j(r)|| \right] \\ 
    +||\phi^i(k-1)||+\dfrac{1}{m}\sum_{i=1}^m||\phi^i(k-1)||.
    \end{split}   
\end{equation}

\noindent Similarly, using Lemma \ref{ConvergenceOfPhi}, the second right-hand term is

\begin{equation}
\begin{split}
 ||x^i(k)-y(k)||\leq  2mC\beta^{k-1}\sum_{j=1}^m ||x^j(0)|| 
    + 2mCL \sum_{r=0}^{k-2}\beta^{k-r} \alpha_r \\
    + \alpha_{k-1}||d_i(k-1)||+\dfrac{\alpha_{k-1}}{m}\sum_{i=1}^m||d_i(k-1)|| \\
    + \sum_{r=0}^{k-2}\sum_{j=1}^{m}\left[ \left| [\Phi(k-1,r+1)]_j^i-\dfrac{1}{m}\sum_{i=1}^m[\Phi(k-1,r+1)]_j^i \right| ||\phi^j(r)|| \right] \\ 
    +||\phi^i(k-1)||+\dfrac{1}{m}\sum_{i=1}^m||\phi^i(k-1)||.
    \end{split}   
\end{equation}

\noindent Using Lemma \ref{phiBounded} and the gradient bound, the third-hand right term is

\begin{equation}
\begin{split}
 ||x^i(k)-y(k)||\leq 
 2mC\beta^{k-1}\sum_{j=1}^m ||x^j(0)|| 
    + 2mCL \sum_{r=0}^{k-2}\beta^{k-r} \alpha_r
    + 2\alpha_{k-1}L \\
    + \sum_{r=0}^{k-2}\sum_{j=1}^{m}\left[ \left| [\Phi(k-1,r+1)]_j^i-\dfrac{1}{m}\sum_{i=1}^m[\Phi(k-1,r+1)]_j^i \right| ||\phi^j(r)|| \right] \\ 
    +||\phi^i(k-1)||+\dfrac{1}{m}\sum_{i=1}^m||\phi^i(k-1)||.
    \end{split}   
\end{equation}

\noindent  Using again Lemma \ref{phiBounded} and \ref{ConvergenceOfPhi}, we obtain for the last two terms, 

\begin{equation}
\begin{split}
 ||x^i(k)-y(k)||\leq 
 2mC\beta^{k-1}\sum_{j=1}^m ||x^j(0)|| 
    + 2mCL \sum_{r=0}^{k-2}\beta^{k-r} \alpha_r
    + 2\alpha_{k-1}L \\
    + 2mCL \sum_{r=0}^{k-2} \beta^{k-r}\alpha_r 
    +2\alpha_{k-1}L.
    \end{split}   
\end{equation}

\noindent We therefore obtain,

\begin{equation}
\begin{split}
 ||x^i(k)-y(k)|| \leq 
 2mC\beta^{k-1}\sum_{j=1}^m ||x^j(0)|| 
    + 4mCL \sum_{r=0}^{k-2}\beta^{k-r} \alpha_r
    + 4\alpha_{k-1}L.
    \end{split}
    \label{norm(x-y)}
\end{equation}

\noindent Since $0< \beta < 1$, $\beta^k \rightarrow 0$ as $k\rightarrow \infty$. Assuming that $\alpha_k \rightarrow 0$ and taking the limit superior, we have for all $i$,

\begin{equation}
    \limsup_{k\to\infty}||x^i(k)-y(k)|| \leq 4mCL \limsup_{k\to\infty} \sum_{r=0}^{k-2}\beta^{k-r}\alpha_r.
\end{equation}

\noindent By Lemma \ref{SummationLemma}, we have
\begin{align*}
     \lim_{k\to\infty} \sum_{r=0}^{k-2}\beta^{k-r}\alpha_r = 0.
\end{align*}

\noindent Therefore, $\lim_{k\to\infty}||x^i(k)-y(k)||=0$ for all $i$. \\

\noindent (b) By multiplying \eqref{norm(x-y)} with $\alpha_k$, we get

\begin{align*} 
 \alpha_k||x^i(k)-y(k)||\leq 
 2mC \alpha_k\beta^{k-1}\sum_{j=1}^m ||x^j(0)|| + 
 4mCL \sum_{r=0}^{k-2}\beta^{k-r} \alpha_k \alpha_r + 4 \alpha_k\alpha_{k-1}L. 
    \end{align*}
    
\noindent Using $2\alpha_k\alpha_r \leq \alpha_k^2+\alpha_r^2$ and $\alpha_k\beta^{k-1} \leq \alpha_k^2+\beta^{2(k-1)}$ for any $k$ and $r$, we obtain

\begin{align*}
   \alpha_k||x^i(k)-y(k)|| \leq 2mC\beta^{2(k-1)}\sum_{j=1}^m||x^j(0)||+ 2mC\alpha_k^2\sum_{j=1}^m||x^j(0)|| \\ + 2mCL\alpha_k^2\sum_{r=0}^{k-2}\beta^{k-r} + 2mCL\sum_{r=0}^{k-2}\beta^{k-r}\alpha_r^2 + 2L(\alpha_k^2+\alpha_{k-1}^2).
\end{align*}

\noindent Since $\sum_{r=0}^{k-2}\beta^{k-r} \leq \dfrac{1}{1-\beta}$, we have

\begin{align*}
   \alpha_k||x^i(k)-y(k)|| \leq 2mC\beta^{2(k-1)}\sum_{j=1}^m||x^j(0)||+ 2mC\alpha_k^2\sum_{j=1}^m||x^j(0)|| \\ + 2mCL\alpha_k^2\dfrac{1}{1-\beta} + 2mCL\sum_{r=0}^{k-2}\beta^{k-r}\alpha_r^2 + 2L(\alpha_k^2+\alpha_{k-1}^2).
\end{align*}

\noindent By summing from $k=1$ to $k =\infty$, we obtain

\begin{equation}
\begin{split}
   \sum_{k=1}^{\infty} \alpha_k||x^i(k)-y(k)|| \leq 2mC\sum_{k=1}^{\infty}\beta^{2(k-1)}\sum_{j=1}^m||x^j(0)||+ \\ 2mC\sum_{k=1}^{\infty}\alpha_k^2\sum_{j=1}^m||x^j(0)|| + 2mCL\dfrac{1}{1-\beta}\sum_{k=1}^{\infty}\alpha_k^2 + \\ 2mCL\sum_{k=1}^{\infty}\sum_{r=0}^{k-2}\beta^{k-r}\alpha_r^2 + 2L\sum_{k=1}^{\infty}(\alpha_k^2+\alpha_{k-1}^2).
   \end{split}
\label{sum_alpha_nrom(x-y)}
\end{equation}

In \eqref{sum_alpha_nrom(x-y)}, the first term is summable since $0 < \beta < 1$. The second and third, and fifth terms are also summable since $\sum_{k\rightarrow \infty} \alpha_{k}^2 < \infty $. By Lemma \ref{SummationLemma}, the fourth term is summable. Thus, $\sum_{k=1}^{\infty} \alpha_k||x^i(k)-y(k)|| < \infty \text{\:for all\:} i$. \hfill $\square$


\end{document}